\def\Q{\mbox{${\mathbb Q}$}}
\newcommand{\maple}{{\sc Maple}}
\newcommand{\Maple}{{\sc Maple}}
\def\delta{\mbox{{$\Delta$}}}
\def\RegularChains{\mbox{{\tt RegularChains}}}
\newcommand{\nvcc}{\mbox{\sc nvcc}}
\newcommand{\openmp}{\mbox{\sc OpenMP}}
\newcommand{\meta}{\mbox{\sc MetaFork}}
\newcommand{\ppcg}{\mbox{\sc PPCG}}
\newcommand{\ctocuda}{\mbox{\sc C-to-CUDA}}
\newcommand{\hicuda}{\mbox{\sc HiCUDA}}
\newcommand{\cudachill}{\mbox{\sc CUDA-CHiLL}}
\newcommand{\llvm}{\mbox{\sc LLVM}}
\newcommand{\ptx}{\mbox{\sc PTX}}
\newcommand{\openacc}{\mbox{\sc OpenACC}}
\newcommand{\opencl}{\mbox{\sc OpenCL}}
\newcommand{\cuda}{\mbox{\sc CUDA}}
\newcommand{\hidetext}[1]{\mbox{ \ }}
\newtheorem{lemma}{Lemma}
\newtheorem{definition}{Definition}
\newtheorem{remark}{Remark}
\newenvironment{proof}{ {\noindent} {\sc Proof} $\rhd$}{$\lhd$}
\newif\ifarticle
\newif\ifextendedabstract
\begin{document}

\title{Comprehensive Optimization of Parametric Kernels for Graphics Processing Units}

\author{\begin{tabular}{cccc}
Xiaohui Chen & Marc Moreno Maza & Jeeva Paudel & Ning Xie \\
AMD          & U. Western Ontario     & IBM Canada Ltd & Huawei Technologies Canada \\
Markham, ON  & London, ON       & Markham, ON    & Markham, ON \\
Canada       & Canada           & Canada         & Canada \\
\end{tabular}
}
\maketitle

\begin{abstract}
This work deals with the optimization of computer programs targeting 
Graphics Processing Units (GPUs). The goal is to lift, from 
programmers to optimizing compilers, the heavy burden of determining 
program details that are dependent on the hardware characteristics. 
The expected benefit is to improve robustness, portability and 
efficiency of the generated computer programs. We address these 
requirements by: 
\begin{itemizeshort}
\item treating machine and program parameters as unknown symbols during 
code generation, and 
\item generating optimized programs in the form of a case discussion, 
based on the possible values of the machine and program parameters. 
\end{itemizeshort}
By taking advantage of recent advances in the area of computer algebra, 
preliminary experimentation yield promising results.
\end{abstract}

\section{Introduction}

It is well-known that the advent of hardware acceleration technologies
(multicore processors, graphics
processing units, field programmable gate arrays) provide vast
opportunities for innovation in computing.  In particular, GPUs
combined with {\em low-level heterogeneous programming models}, such
as {\cuda} (the {\em Compute Unified Device Architecture}, 
see~\cite{Nickolls:2008:SPP:1365490.1365500,cuda2015}), 
brought super-computing to the level of the desktop computer.
However, these low-level programming models carry notable challenges, even to
expert programmers. Indeed, fully exploiting the power of hardware
accelerators by writing {\cuda} code often requires significant code
optimization effort. While such effort can yield high
performance, it is desirable for many programmers to avoid the
explicit management of the hardware accelerator, e.g. data transfer
between host and device, or between memory levels of the device.
To this end, {\em high-level} models for accelerator programming, 
notably {\openmp}~\cite{dagum1998openmp,openmp2015}
and {\openacc}~\cite{DBLP:conf/lcpc/TianX0YCC13,openacc2015}, have become an important research
direction. With these models, programmers only need to annotate their
C/C++ (or FORTRAN) code to indicate which portion of code is to be
executed on the device, and how data is mapped between host and device.

In {\openmp} and {\openacc}, the division of the work between thread
blocks within a grid, or between threads within a thread block, can be
expressed in a loose manner, or even ignored.  This implies that code
optimization techniques must be applied in order to derive
efficient {\cuda} code. Moreover, existing software packages (e.g. 
{\ppcg}~\cite{DBLP:journals/taco/VerdoolaegeJCGTC13},
{\ctocuda}~\cite{Baskaran:2010:ACC:2175462.2175482},
{\hicuda}~\cite{Han:2009:HCH:1513895.1513902},
{\cudachill}~\cite{Khan:2013:SAC:2400682.2400690}) for generating
{\cuda} code from annotated C/C++ programs, either let the user choose,
or make assumptions on, the characteristics of the targeted hardware, 
and on how the work
is divided among the processors of that device. These choices
and assumptions
limit {\em code portability} as well as opportunities for 
{\em code optimization}.  This latter fact will be illustrated
with dense matrix multiplication, through Figures~\ref{fig:metamul} and~\ref{fig:cudamul},
as well as Table~\ref{tab:mm}.

To deal with these challenges in translating annotated C/C++ programs
to {\cuda}, we propose in~\cite{DBLP:conf/cascon/ChenCKMX15} to
generate {\em parametric {\cuda} kernels}, that is, {\cuda} kernels
for which program parameters (e.g. number of threads per thread block) and
machine parameters (e.g. shared memory size) are symbolic entities instead of
numerical values.
Hence, the values of these parameters need not to be known 
during code generation: machine parameters can be looked up 
when the generated code is loaded on the target machine,
while program parameters can be deduced, for instance, by auto-tuning.
See Figure~\ref{fig:cudamul} for an example of parametric {\cuda} kernels.
A proof-of-concept implementation, presented
in~\cite{DBLP:conf/cascon/ChenCKMX15} and publicly
available\footnote{\url{www.metafork.org}}, uses another high-level
model for accelerator programming, called {\meta}, that we introduced
in~\cite{DBLP:conf/iwomp/ChenMSU14}.  The experimentation shows that
the generation of parametric {\cuda} kernels can lead to significant
performance improvement w.r.t. approaches based on the generation of
{\cuda} kernels that are {\em not} parametric.  Moreover, for certain
test-cases, our experimental results show that the optimal choice for
program parameters may depend on the input data size. For instance,
the timings gathered in Table~\ref{tab:mm} show that 
the format of the 2D thread-block
of the best {\cuda}
kernel that we could generate is $16 \times 8$
for matrices of order $2^{10}$ and $32 \times 8$ for
matrices of order $2^{11}$.  Clearly, parametric {\cuda} kernels are
well-suited for this type of test-cases.

In this paper, our goal is to enhance the 
framework initiated in~\cite{DBLP:conf/cascon/ChenCKMX15} by
generating {\em optimized} parametric {\cuda} kernels.
As we shall see, this can be done 
in the form of a case discussion, based on the possible values
of the machine and program parameters.
The output of a procedure generating optimized parametric 
{\cuda} kernels will be called a {\em comprehensive 
parametric {\cuda} kernel}. 
A simple example is shown on Figure~\ref{fig:compparkern}.
In broad terms, this is  a decision tree where:
\begin{enumerateshort}
\item each internal node 
      is a Boolean condition on the machine  and program parameters, and
\item each leaf is a {\cuda} program ${\cal P}$, optimized w.r.t. 
      prescribed criteria and optimization techniques, under 
      the conjunction of the conditions along the path
      from the root of the tree to ${\cal P}$.
\end{enumerateshort}
The intention, with this concept, is to automatically generate
optimized {\cuda} kernels from annotated C/C++ code without knowing
the numerical values of some or even any of
the machine and program parameters.
This naturally leads to case distinction depending on
the values of those parameters, which materializes into a disjunction
of conjunctive non-linear polynomial constraints.  Symbolic
computation, aka computer algebra, is the natural framework for
manipulating such systems of constraints; our {\RegularChains}
library\footnote{This library, shipped
with the commercialized computer algebra system {\sc Maple}, 
is freely available at \url{www.regularchains.org}.} 
provides the appropriate algorithmic tools for that task.

Other  research groups  have  approached the  questions  of {\em  code
portability} and  {\em code  optimization} in  the context  of {\cuda}
code  generation   from  high-level   programming  models.  They  use
techniques like auto-tuning~\cite{grauer2012auto,Khan:2013:SAC:2400682.2400690},
dynamic instrumentation~\cite{kistler2003continuous} or both~\cite{song2015automated}.
Rephrasing~\cite{Khan:2013:SAC:2400682.2400690}, ``those techniques
explore empirically different data  placement and thread/block mapping
strategies, along  with other  code generation  decisions, thus 
facilitating the finding of a high-performance solution.''

In the case of auto-tuning techniques, which have been used
successfully in the celebrated projects 
ATLAS~\cite{DBLP:conf/ppsc/WhaleyD99},
FFTW~\cite{DBLP:conf/icassp/FrigoJ98}, and
SPIRAL~\cite{DBLP:journals/ijhpca/PuschelMSXJPVJ04}, part of the code
optimization process is done {\em off-line}, that is, the input code
is analyzed and an optimization strategy (i.e a sequence of composable
code transformations) is generated, and then applied on-line (i.e. on
the targeted hardware).
We propose to push this idea further by applying the optimization strategy
off-line, thus, 
even before the code is loaded on the targeted hardware.

\ifextendedabstract
We conclude this extended abstract with an example
illustrating 
\else
Let us illustrate, with an example, 
\fi
the notion of comprehensive
parametric {\cuda} kernels, along with a procedure to generate them.
Our input is the for-loop nest of Figure~\ref{fig:tiledmetaforloop}
 which
computes the sum of two matrices {\tt b} and {\tt c} of order $N$
using a blocking strategy; each matrix is divided into
blocks of format ${\tt B0} \times {\tt B1}$.
This input code is 
annotated for parallel execution in the {\meta} language.
The body of the statement \textcolor{red}{\tt meta\_schedule} is meant
to be offloaded to a GPU device and each
\textcolor{red}{meta\_for} loop is a parallel for-loop
where all iterations can be executed concurrently.

\begin{figure}[!htb]
\begin{Verbatim}[commandchars=\\\{\}, fontsize=\footnotesize]
  int dim0 = N/B0, dim1 = N/(2*B1);
  \textcolor{red}{meta_schedule} \{
    \textcolor{red}{meta_for} (int v = 0; v < dim0; v++)
      \textcolor{red}{meta_for} (int p = 0; p < dim1; p++)
        \textcolor{red}{meta_for} (int u = 0; u < B0; u++) 
          \textcolor{red}{meta_for} (int q = 0; q < B1; q++) \{
            int i = v * B0 + u;
            int j = p * B1 + q;
            if (i < N && j < N/2) \{
              c[i][j] = a[i][j] + b[i][j]; 
              c[i][j+N/2] = 
                 a[i][j+N/2] + b[i][j+N/2]; 
            \}
          \}
  \}
\end{Verbatim}
\caption{A \textcolor{red}{\tt meta\_for} loop nest for adding two matrices.}
\label{fig:tiledmetaforloop}
\end{figure}

We make the following simplistic assumptions for the 
translation of this for-loop nest to {\cuda}.
\begin{enumerateshort}
\item The target machine has two parameters: the maximum number $R$ 
      of registers per thread, and the maximum number $T$ of threads per 
      thread-block; all other hardware limits are ignored.
\item The generated kernels depend on two program parameters, $B_0$ and
      $B_1$, which define the format of a 2D thread-block.
\item The optimization strategy (w.r.t. register usage per thread) 
      consists in reducing the work per thread
      (by reducing loop granularity).
\end{enumerateshort}
A possible comprehensive parametric {\cuda} kernel 
is given by the pairs $(C_1, K_1)$ and $(C_2, K_2)$, where $C_1, C_2$ are
two sets of algebraic constraints on the parameters and $K_1, K_2$ are
two {\cuda} kernels that are optimized under the 
constraints respectively given by $C_1, C_2$, 
see Figure~\ref{fig:compparkern}.
The following computational steps yield the 
pairs $(C_1, K_1)$ and $(C_2, K_2)$.
\begin{enumerateshort}
\item[(S1)]  The {\meta} code is mapped to an
      intermediate representation (IR)
      say that of {\llvm}\footnote{
       Quoting Wikipedia: 
       ``The {\llvm} compiler infrastructure project 
       (formerly Low Level 
       Virtual Machine~\cite{Lattner:2004:LCF:977395.977673,Bertolli:2014:CGT:2688361.2688364})
       is a framework for developing 
       compiler front ends and back ends''.}, or alternatively, to
       {\ptx}\footnote{The {\em Parallel Thread Execution} 
       (PTX)~\cite{ptx2015}
        is the pseudo-assembly language to which
       {\cuda} programs are compiled by NVIDIA's {\nvcc} compiler.
       {\ptx} code can also be generated from (enhanced) {\llvm} IR,
       using {nvptx} back-end~\cite{nvptx},
       following the work of~\cite{Rhodin2010}.} code.
\item[(S2)] Using this IR (or {\ptx}) code, one {\em estimates} the 
      number of registers that a thread requires;
     on this example, using {\llvm} IR, we obtain an estimate of $14$.
\item[(S3)] Next, we apply the optimization strategy, yielding 
      a new IR (or {\ptx}) code, for which register pressure reduces to $10$.
      Since no other optimization techniques are considered,
      the procedure stops with the result shown on 
      Figure~\ref{fig:compparkern}.
\end{enumerateshort}
Note that, strictly speaking, the kernels $K_1$ and $K_2$ on
Figure~\ref{fig:compparkern} should be given by {\ptx} code.  But for
simplicity, we are presenting them by counterpart {\cuda} code.

\begin{figure}[!htb]
\begin{small}
$C_1 : \left\{ \begin{array}{l} B_0 \times B_1 \leq T \\ 14 \leq R \end{array} \right.   $ 
\end{small}

\begin{scriptsize}
\begin{verbatim}
__global__  void K1(int *a, int *b, int *c, int N, 
                                 int B0, int B1) {
     int i = blockIdx.y * B0 + threadIdx.y;
     int j = blockIdx.x * B1 + threadIdx.x;
     if (i < N && j < N/2) {
        a[i*N+j] = b[i*N+j] + c[i*N+j];
        a[i*N+j+N/2] = b[i*N+j+N/2] + c[i*N+j+N/2];
     }
}
dim3 dimBlock(B1, B0);
dim3 dimGrid(N/(2*B1), N/B0);
K1 <<<dimGrid, dimBlock>>> (a, b, c, N, B0, B1);
\end{verbatim}
\end{scriptsize}

\begin{small}
$C_2 : \left\{ \begin{array}{l} B_0 \times B_1 \leq T \\ 10 \leq R < 14 \end{array} \right.  $
\end{small}

\begin{scriptsize}
\begin{verbatim}
__global__  void K2(int *a, int *b, int *c, int N, 
                                 int B0, int B1) {
     int i = blockIdx.y * B0 + threadIdx.y;
     int j = blockIdx.x * B1 + threadIdx.x;
     if (i < N && j < N)
        a[i*N+j] = b[i*N+j] + c[i*N+j];
}
dim3 dimBlock(B1, B0);
dim3 dimGrid(N/B1, N/B0);
K2 <<<dimGrid, dimBlock>>> (a, b, c, N, B0, B1);
\end{verbatim}
\end{scriptsize}
\caption{A comprehensive parametric {\cuda} kernel for matrix addition.}
\label{fig:compparkern}
\end{figure}

\ifextendedabstract
While this was a {\em toy-example}, advanced
test cases can be found in Chapter 7 of the PhD thesis
of the first author at
\begin{center}
\url{http://ir.lib.uwo.ca/etd/4429}
\end{center}
\else
This paper is organized as follows.  In
Section~\ref{sec:parametric_kernels}, we review the notion of a {\em
parametric {\cuda} kernel} through an example.
In Section~\ref{sec:comprehensive_optimization},
we introduce the notion of {\em comprehensive optimization}
of a code fragment together with an algorithm for computing it.
In Section~\ref{comprehensive_translation}, we explain
how this latter notion applies to the generation
of parametric {\cuda} kernels generated from 
a program written in  a high-level accelerator model namely
{\meta}.
Finally, experimental results are provided in
Section~\ref{sec:experimentation}.
\fi

\section{Parametric kernels}
\label{sec:parametric_kernels}

We review and illustrate the notion of a {\em parametric {\cuda}
  kernel} (introduced in~\cite{DBLP:conf/cascon/ChenCKMX15}) with an
example: computing the product of two dense square matrices of order
{\tt n}.
Figure~\ref{fig:metamul} shows a code snippet,
expressed in the {\meta} language,  performing 
a {\em blocking strategy}.
Each iteration of the  {\em parallel for-loop nest} (i.e. the 4
{\tt meta\_for} nested loops) computes 
{\tt s} coefficients of the product matrix.
The blocks in the matrices {\tt a}, {\tt b}, {\tt c}
have format ${\tt B0} \times {\tt B0}$, 
${\tt B0} \times ({\tt ub1} \cdot {\tt s})$, 
${\tt B0} \times ({\tt ub1} \cdot {\tt s})$.
Note that memory accesses to {\tt a}, {\tt b}, {\tt c}
are coalesced in both codes.

\begin{figure}[htb]
\begin{Verbatim}[commandchars=\\\{\}, fontsize=\scriptsize]
assert(B0 <= ub1 * s);
int dim0 = n / B0, dim1 = n / (ub1 * s);
\textcolor{red}{meta_schedule} \{
  \textcolor{red}{meta_for} (int i = 0; i < dim0; i++) 
    \textcolor{red}{meta_for} (int j = 0; j < dim1; j++) 
      for (int k = 0; k < n / B0; k++) 
        \textcolor{red}{meta_for} (int v = 0; v < \textcolor{blue}{B0}; v++) 
          \textcolor{red}{meta_for} (int u = 0; u < \textcolor{blue}{ub1}; u++) \{
            int p = i * B0 + v;
            int q = j * ub1 * s + u;
            for (int z = 0; z < B0; z++) 
              for (int w = 0; w < \textcolor{blue}{s}; w++) \{
                int x = w * ub1;
                c[p][q+x] += 
                  a[p][B0*k+z] * b[B0*k+z][q+x];
              \}
          \}
\}
\end{Verbatim}
\caption{{\meta} matrix multiplication using a blocking strategy
in {\meta}.}
\label{fig:metamul}
\end{figure}

\begin{figure}[!htb]
\begin{Verbatim}[commandchars=\\\{\}, fontsize=\scriptsize]
__global__ void kernel0(int *a, int *b, int *c, int 
   n, int dim0, int dim1, int \textcolor{blue}{B0}, int \textcolor{blue}{ub1}, int \textcolor{blue}{s}) \{
  int b0 = blockIdx.y, b1 = blockIdx.x;
  int t0 = threadIdx.y, t1 = threadIdx.x;
  int private\_p, private\_q;
\color{blue}  assert(B_0 == B0); assert(B_1 == ub1 * s);
  \_\_shared\_\_ int shared\_a[{B\_0}][B\_0];
  \_\_shared\_\_ int shared\_b[B\_0][{B\_1}];
  int private\_c[1][{S}];  \color{blue}assert(S == s);

  for (int c0 = b0; c0 < dim0; c0 += 256)
    for (int c1 = b1; c1 < dim1; c1 += 256) \{
      private\_p = ((c0) * (B0)) + (t0);
      private_q = ((c1) * (ub1 * s)) + (t1);
      for (int c5 = 0; c5 < \textcolor{blue}{S}; c5 += 1)
        if (n >= private\_p + 1 \&\& 
        n >= private_q + (c5) * (ub1) + 1)
          private\_c[0][c5] = c[(private\_p) * n + 
                  (private_q + (c5) * (ub1))];
      for (int c2 = 0; c2 < n / B0; c2 += 1) \{
        if (t1 < B0 \&\& n >= private\_p + 1)
          shared\_a[t0][t1] = 
            a[(private\_p) * n + (t1 + B0 * c2)];
        for (int c5 = 0; c5 < \textcolor{blue}{S}; c5 += 1)
          if (t0 < B0 \&\& 
          n >= private\_q + (c5) * (ub1) + 1)
            shared\_b[t0][(c5) * (ub1) + t1] = 
               b[(t0 + B0 * c2) * n + 
               (private\_q + (c5) * (ub1))];
        \_\_syncthreads();
        for (int c6 = 0; c6 < B0; c6 += 1)
          for (int c5 = 0; c5 < \textcolor{blue}{S}; c5 += 1)
            private\_c[0][c5] += 
                (shared\_a[t0][c6] * 
                shared\_b[c6][c5 * ub1 + t1]);
        \_\_syncthreads();
      \}
      for (int c5 = 0; c5 < \textcolor{blue}{S}; c5 += 1)
        if (n >= private\_p + 1 \&\& 
        n >= private_q + (c5) * (ub1) + 1)
          c[(private\_p) * n + 
          (private_q + (c5) * (ub1))] =
              private\_c[0][c5];
      \_\_syncthreads();
    \}
\}
\end{Verbatim}
\caption{{\cuda} kernel generated from 
a {\tt meta\_schedule} statement in the {\meta} language.}
\label{fig:cudamul}
\end{figure}

Figure~\ref{fig:cudamul} shows a {\cuda} kernel code generated
from the {\meta} code snippet of Figure~\ref{fig:metamul}.
Observe that {\tt kernel0} takes the {\em program parameters}
{\tt B0} and {\tt ub1} as arguments, whereas
non-parametric {\cuda} kernels usually 
only take data parameters (here $a, b, c, n$) as input arguments. 
Note also that, in order to allocate memory for the shared arrays 
{\tt shared\_a}, {\tt shared\_b}, {\tt shared\_c},
we predefine the names 
{\tt B\_0}, {\tt B\_1} as macros and specify their values
at compile time. Note that the assert statements 
ensure that {\tt B0}, {\tt ub1} match {\tt B\_0}, {\tt B\_1}.

To conclude with this example, we gather in Table~\ref{tab:mm} 
speedup factors w.r.t. a highly optimized  serial C program
implementing the same blocking strategy.
The numbers in {\bf bold} fonts correspond to the best speedup factors
by a parametric kernel on a given input size.
We observe that:
\begin{enumerateshort}
\item for $s = 4$, ${\tt ub1} = 16$, ${\tt B0} = 8$
when $n = 2^{10}$, and
\item for $s = 4$, ${\tt ub1} = 32$, ${\tt B0} = 8$
when $n = 2^{11}$,
\end{enumerateshort}
the parametric kernel of Figure~\ref{fig:cudamul}
provides the best results.

\begin{table}[htb]
\begin{tabular}{|c|c|c|c|c|}
\hline
{Thread-block $\backslash$ Input} & \multicolumn{2}{|c|}{$2^{10} * 2^{10}$} & \multicolumn{2}{|c|}{$2^{11} * 2^{11}$} \\
\hline
(ub1, B0)  &s = 2	&s = 4 & s = 2 & s = 4 \\
\hline
(16, 4)	&95 &128  &90 &119	\\
(32, 4)	&128 &157 &125 &144 \\
(64, 4)	&111  &145  &105 &132	\\
(8, 8)	&131 &151 &126 &146	\\
(16, 8)	&164 &\textbf{194}  &159 &188 \\
(32, 8)	&163  &187  &158 &\textbf{202}	\\
(64, 8)	&94 &143  &104 &135	\\
\hline
B0 & \multicolumn{4}{|c|}{Register usage for s = 4} \\
\hline
4 & \multicolumn{4}{c|}{38} \\
8 &  \multicolumn{4}{c|}{34} \\
\hline
\end{tabular}
\caption{\rm Speedup factors on an NVIDIA Tesla M2050 
for our 
kernel generated by {\meta} with compilation flag {\tt --maxrregcount=40}.
}
\label{tab:mm}
\end{table}

\section{Comprehensive Optimization}
\label{sec:comprehensive_optimization}

We consider a code fragment written in one of the linguistic
extensions of the C language targeting a computer device, for
instance, a hardware accelerator.  We assume that some, or all, of the
hardware characteristics of this device are unknown at compile time.
However, we would like to optimize our input code fragment w.r.t
prescribed resource counters (e.g. memory usage) and performance
counters (e.g. occupancy on a GPU device).  To this end, the hardware
characteristics of this device, as well as the program and data
parameters of the code fragment, are treated as symbols. From there,
we generate polynomial constraints (with those symbols as
indeterminate variables) so as to $(i)$ensure that sufficient
resources are available to run the transformed code, and $(ii)$
attempt to improve the code performance.

\subsection{Hypotheses on the input code fragment}  %
\label{sec:fragment}
We consider a sequence  ${\cal S}$ of statements
from the C programming language and introduce the following.

\begin{definition}
\label{defi:parameterOfCodeFragment}
We call {\em parameter} of ${\cal S}$ any scalar variable
that is $(i)$ read in ${\cal S}$ at least once, and
$(ii)$ never written in ${\cal S}$.
We call {\em data} of ${\cal S}$ any non-scalar variable (e.g. array)
that is not initialized but possibly overwritten within ${\cal S}$.
If a parameter of  ${\cal S}$ gives a dimension size of a data
of ${\cal S}$, then this parameter is called a {\em data parameter};
otherwise, it is simply called a {\em program parameter}.
\end{definition}

We denote by $D_1, \ldots, D_u$ and $E_1, \ldots, E_v$ 
the data parameters and program parameters 
of ${\cal S}$, respectively.

We make the following assumptions on  ${\cal S}$:
\begin{enumerateshort}
\item[(H1)] All parameters are assumed to be non-negative integers.
\item[(H2)] We assume that ${\cal S}$ can be viewed as the body of
            a valid C function having the parameters and data of  ${\cal S}$
            as unique arguments.
\end{enumerateshort}

The sequence of statements $S$ can be the body of a 
kernel function in {\cuda}. 
In the kernel code of Figure~\ref{fig:cudamul}, {\tt B0} and {\tt ub1}
are program parameters while {\tt a}, {\tt b} and {\tt c} are the data,
and that {\tt n} is the data parameter.

\subsection{Hardware resource limits and performance measures}
\label{sec:limits}

We denote by $R_1, \ldots, R_s$ the {\em hardware resource limits} of
the targeted hardware device. Examples of these quantities for 
the NVIDIA Kepler micro-architecture are
the maximum number of registers to be allocated per 
       thread, and
the maximum number of shared memory words to be allocated  
      per thread-block.
We denote by ${P}_1, \ldots, {P}_t$ the {\em performance measures}
of a program running on the device.
These are dimensionless quantities defined as percentages.
Examples of these quantities for the NVIDIA Kepler
micro-architecture are
the SM occupancy (that is, the ratio of the number of active warps to the
       maximum number of active warps)
and he cache hit rate in an streaming multi-processor (SM).

For a given hardware device, $R_1, \ldots, R_s$ are 
positive integers, and  each of them is
the maximum value of a hardware resource.
Meanwhile, ${P}_1, \ldots, {P}_t$ are rational numbers 
between $0$ and $1$.
However, for the purpose of writing code portable across a 
variety of devices with similar characteristics, the quantities
$R_1, \ldots, R_s$ and ${P}_1, \ldots, {P}_t$ will be treated as unknown  
and independent variables.
These hardware resource limits and performance measures
will be called the {\em machine parameters}.

Each function $K$ (and, in particular, our input code fragment ${\cal
  S}$) written in the C language for the targeted hardware device has
{\em resource counters} $r_1, \ldots, r_s$ and {\em performance
  counters} $p_1, \ldots, p_t$ corresponding, respectively, to $R_1,
\ldots, R_s$ and ${P}_1, \ldots, {P}_t$.
In other words, the quantities $r_1, \ldots, r_s$ are the amounts of
resources, corresponding to $R_1, \ldots, R_s$, respectively, that $K$
requires for executing.
Similarly, the quantities $p_1, \ldots, p_t$ are the
performance measures, corresponding to ${P}_1, \ldots, {P}_t$,
respectively, that $K$ exhibits when executing.
Therefore, the inequalities $0 \leq r_1 \leq R_1$, \ldots, $0 \leq r_s
\leq R_s$ must hold for the function $K$ to execute correctly.
Similarly, $0 \leq p_1 \leq 1$, \ldots, $0 \leq p_t \leq 1$ are
satisfied by the definition of the performance measures.

\begin{remark}
{\rm 
We note that $r_1, \ldots, r_s$, $p_1, \ldots, p_t$ may be numerical
values, which we can assume to be non-negative rational numbers.  This
will be the case, for instance, for the minimum number of registers
required per thread in a thread-block.
The resource counters $r_1, \ldots, r_s$
may also be polynomial expressions whose indeterminate variables 
can be program parameters
(like the dimension sizes of a thread-block or grid)
or data parameters (like the input data sizes).
Meanwhile, the performance counters $p_1, \ldots, p_t$ may further depend
on the hardware resource limits (like the maximum number of 
active warps supported by an SM).
To summarize,  we observe that  $r_1, \ldots, r_s$ are  polynomials in
${\Q}[D_1, \ldots, D_u, E_1, \ldots,  E_v]$ and $p_1, \ldots, p_t$ are
rational functions where numerators and denominators are in ${\Q}[D_1,
  \ldots, D_u, E_1, \ldots, E_v, R_1, \ldots, R_s]$.  
Moreover, we can assume that the denominators of those functions
are positive.
}
\end{remark}

On Figure~\ref{fig:compparkern}, $R$ and $T$ are
machine parameters while $B_0$ and $B_1$ are program parameters.  The
constraints displayed on Figure~\ref{fig:compparkern} are polynomials
in $R, T, B_0, B_1$.

\subsection{Evaluation of resource and performance counters} %
\label{sec:evaluation_of_counters}

Let $G_{C}({\cal S})$ be the control flow graph (CFG) of ${\cal S}$.
Hence, the statements in the basic blocks of $G_{C}({\cal S})$
are C statements, and we call such a CFG the
{\em source CFG}.
We also map ${\cal S}$ to an intermediate representation, which, itself,
is encoded in the form of a CFG, 
denoted by $G_{L}({\cal S})$, and we call it the {\em IR CFG}.
Here, we refer to the landmark textbook~\cite{Lex}
for the notion of the control flow graph
and that of intermediate representation.

We observe that ${\cal S}$ can trivially be reconstructed from 
$G_{C}({\cal S})$; hence, the knowledge of ${\cal S}$ and that of
$G_{C}({\cal S})$ can be regarded as equivalent. In contrast,
$G_{L}({\cal S})$ depends not only on ${\cal S}$ but also
on the optimization strategies that are applied to the IR
of ${\cal S}$.

Equipped with $G_{C}({\cal S})$ and $G_{L}({\cal S})$, we assume that
we can estimate each of the resource counters $r_1, \ldots, r_s$
(resp. performance counters $p_1, \ldots, p_t$) by applying functions
$f_1, \ldots, f_s$ (resp. $g_1, \ldots, g_t$) to either 
$G_{C}({\cal   S})$ or $G_{L}({\cal S})$. We call
$f_1, \ldots, f_s$ (resp. $g_1, \ldots, g_t$)
the {\em resource} (resp. {\em performance}) evaluation functions.

For instance, when ${\cal S}$ is the body of a {\cuda} kernel and
${\cal S}$ reads (resp. writes) a given array, computing the total
amount of elements read (resp. written) by one thread-block can be
determined from $G_{C}({\cal S})$.  Meanwhile, computing the minimum
number of registers to be allocated to a thread executing ${\cal S}$
requires the knowledge of $G_{L}({\cal S})$.

\subsection{Optimization strategies}   %
\label{sec:optimization_strategies}

In order to reduce the consumption of hardware resources and increase
performance counters, we assume that we have optimization procedures
$O_1, \ldots, O_w$, each of them mapping either a source CFG to
another source CFG, or an IR CFG to another IR CFG.  We
assume that the code transformations performed by $O_1, \ldots, O_w$
preserve semantics.

We associate each resource counter $r_i$, for $i = 1 \cdots s$, with a
non-empty subset ${\sigma}(r_i)$ of $\{ O_1, \ldots, O_w \}$,
such that we have 
$f_i(O({\cal S})) \leq f_i({\cal S})$, for all $O \in {\sigma}(r_i)$.
Hence, ${\sigma}(r_i)$ is a subset of the
optimization strategies among $O_1, \ldots, O_w$ that have the potential
to reduce $r_i$.
Of course, the intention is that for at least one 
$O \in {\sigma}(r_i)$, we have $f_i(O({\cal S})) < f_i({\cal S})$.
A reason for not finding such $O$ would be that ${\cal S}$
cannot be further optimized w.r.t. $r_i$.
We also make a natural {\em idempotence}  assumption:
$f_i(O(O({\cal S}))) = f_i(O({\cal S}))$,
for all $O \in {\sigma}(r_i)$.
Similarly, we associate each performance counter $p_i$, for $i = 1
\cdots t$, with a non-empty subset ${\sigma}(p_i)$  of $\{ O_1, \ldots, O_w \}$,
such that we have
$g_i(O({\cal S})) \geq g_i({\cal S})$
and 
$g_i(O(O({\cal S}))) = g_i(O({\cal S}))$, for all $O \in {\sigma}(p_i)$.
Hence, ${\sigma}(p_i)$ is a subset of the
optimization strategies among $O_1, \ldots, O_w$  that have the potential
to increase $p_i$.
The intention is, again, that
for at least one $O \in {\sigma}(p_i)$, we have
$g_i(O({\cal S})) > g_i({\cal S})$.

\subsection{Comprehensive optimization}  %
\label{sec:comprehensiveOptimization}

Let $C_1, \ldots, C_e$ be semi-algebraic systems
(that is, conjunctions of polynomial equations and inequalities)
with ${P}_1, \ldots, {P}_t$, $R_1, \ldots, R_s$, 
$D_1, \ldots, D_u$, $E_1, \ldots, E_v$ as indeterminate variables.
Let ${\cal S}_1, \ldots, {\cal S}_e$ be fragments of C programs
such that the parameters of each of them are among
$D_1, \ldots, D_u$, $E_1, \ldots, E_v$.

\begin{definition}
\label{defi:comprehensive_optimization}
The sequence $(C_1, {\cal S}_1), \ldots, (C_e,
{\cal S}_e)$ is a {\em comprehensive optimization} of ${\cal S}$ 
w.r.t. 
\begin{enumerateshort}
\item the resource evaluation functions $f_1, \ldots, f_s$,
\item the performance evaluation functions $g_1, \ldots, g_t$ and 
\item the optimization strategies  $O_1, \ldots, O_w$
\end{enumerateshort}
if the following conditions hold:
\begin{enumerateshort}
\item[$(i)$] {\rm [{\sf constraint soundness}]} 
      Each system 
      $C_1, \ldots, C_e$ is
      consistent, that is, admits at least one real solution.
\item[$(ii)$] {\rm [{\sf code soundness}]} 
    For all real values $h_1, \ldots, h_t$, $x_1, \ldots, x_s$, 
    $y_1, \ldots, y_u$, $z_1, \ldots, z_v$ of ${P}_1, \ldots, {P}_t$, 
  $R_1, \ldots, R_s$, $D_1, \ldots, D_u$, $E_1,
  \ldots, E_v$ respectively,
   for all $i \in \{1, \ldots, e\}$
   such that 
   $(h_1, \ldots, h_t$, $x_1\ldots, x_s$, $y_1, \ldots,
   y_u$, $z_1, \ldots, z_v)$ is a solution of $C_i$,
  then the code fragment
  ${\cal S}_i$ produces the same output 
   as ${\cal S}$ on any data that makes ${\cal S}$ execute correctly.
\item[$(iii)$] {\rm [{\sf coverage}]} For all real values
$y_1, \ldots, y_u, z_1, \ldots, z_v$ of 
$D_1, \ldots, D_u$, $E_1, \ldots, E_v$, respectively,
there exist $i \in \{1, \ldots, e\}$  
and real values $h_1, \ldots, h_t$, $x_1, \ldots, x_s$ of 
${P}_1, \ldots, {P}_t$, $R_1, \ldots, R_s$,
such that $(h_1, \ldots, h_t$, $x_1, \ldots, x_s$, 
$y_1, \ldots, y_u, z_1, \ldots, z_v)$
is a solution of $C_i$ and ${\cal S}_i$ produces the same output 
   as ${\cal S}$ on any data that makes ${\cal S}$ execute correctly.
\item[$(iv)$] {\rm [{\sf optimality}]} For every $i \in \{ 1, \ldots, s\}$  (resp. $\{1, \ldots, t\}$),
there exists ${\ell} \in \{1, \ldots, e\}$ such that 
for all $O \in {\sigma}(r_i)$  (resp. ${\sigma}(p_i)$) 
we have $f_i(O({\cal S}_{\ell})) = f_i({\cal S}_{\ell})$
(resp. $g_i(O({\cal S}_{\ell})) = g_i({\cal S}_{\ell})$).
\end{enumerateshort}
\label{def:Comprehensiveoptimization}
\end{definition}

We summarize Definition~\ref{def:Comprehensiveoptimization}
in non technical terms.
Condition $(i)$ states that each system of constraints
         is meaningful.
Condition $(ii)$ states
that as long as the machine, program and data parameters 
satisfy $C_i$, the code fragment ${\cal S}_i$ 
produces the same output as ${\cal S}$ on whichever
data that makes ${\cal S}$ execute correctly.
Condition $(iii)$ states that as long as ${\cal S}$ 
executes correctly on a given set of 
parameters and data, there exists  a  
code fragment ${\cal S}_i$, for suitable values of the machine parameters,
such that ${\cal S}_i$ produces the same output as ${\cal S}$
 on that set of 
parameters and data.
Finally, Condition $(iv)$ states that for each resource counter $r_i$ 
(performance counter $p_i$),
there exists at least one code fragment ${\cal S}_{\ell}$ 
for which this counter is optimal 
in the sense that it cannot be further optimized by the optimization
strategies from ${\sigma}(r_i)$  (resp. ${\sigma}(p_i)$).

\subsection{Data-structures}  %
\label{sec:data-structures}

The algorithm presented in Section~\ref{sect:mainalgo}  
computes a comprehensive optimization of ${\cal S}$ 
w.r.t. the evaluation functions $f_1, \ldots, f_s$, $g_1, \ldots, g_t$ and
optimization strategies  $O_1, \ldots, O_w$.
Hereafter, we define the main data-structure used during the course of the 
algorithm. We associate ${\cal S}$ with 
what we call a {\em quintuple}, denoted by $Q({\cal S})$
and defined as follows:
$Q({\cal S}) = ( G_{C}({\cal S}), 
                 {\lambda}({\cal S}),
                {\omega}({\cal S}),
                 {\gamma}({\cal S}),
                 C({\cal S}) )$, 
where
\begin{enumerateshort}
\item ${\lambda}({\cal S})$ is the sequence of 
      the optimization procedures among $O_1, \ldots, O_w$
      that have already been applied to the IR of ${\cal S}$; 
      hence, $G_{C}({\cal S})$ together with ${\lambda}({\cal S})$
      defines $G_{L}({\cal S})$; initially, ${\lambda}({\cal S})$ is 
      empty,
\item ${\omega}({\cal S})$ is the sequence of the optimization
      procedures among $O_1, \ldots, O_w$ that have not
      been applied so far to either $G_{C}({\cal S})$
      or $G_{L}({\cal S})$; initially, ${\omega}({\cal S})$
      is $O_1, \ldots, O_w$,
\item ${\gamma}({\cal S})$ is the sequence of resource
      and performance
      counters that remain to be evaluated on ${\cal S}$;
      initially, ${\gamma}({\cal S})$ is
      $r_1, \ldots, r_s, p_1, \ldots, p_t$,
\item $C({\cal S})$ is the sequence of the polynomial constraints 
      on 
      ${P}_1, \ldots, {P}_t$, $R_1, \ldots, R_s$, 
      $D_1, \ldots, D_u$, $E_1, \ldots, E_v$
      that have been computed so far;
      initially, $C({\cal S})$ is 
      $1 \geq {P}_1 \geq 0, \ldots, 1 \geq {P}_t \geq 0$,
      $R_1 \geq 0, \ldots, R_s \geq 0$, 
      $D_1 \geq 0, \ldots, D_u \geq 0$, $E_1 \geq 0, \ldots, E_v \geq 0$.
\end{enumerateshort}
We say that the quintuple $Q({\cal S})$ is {\em processed}
whenever ${\gamma}({\cal S})$ is empty; otherwise, 
we say that $Q({\cal S})$ is {\em in-process}.

\begin{remark}
\label{rem:stacks}
{\rm 
For the above $Q({\cal S})$, each of the sequences
${\lambda}({\cal S})$, ${\omega}({\cal S})$, ${\gamma}({\cal S})$
and $C({\cal S})$ is implemented as a stack in 
Algorithms~\ref{algo:algorithm} and \ref{algo:optimize}.
Hence, we need to specify how operations on a sequence
is performed on the corresponding stack.
Let $s_1, s_2, \ldots, s_N$ is a sequence.
\begin{enumerateshort}
\item Popping one element out of this sequence
         returns $s_1$ 
and leaves that sequence with $s_2, \ldots, s_N$,
\item Pushing an element $t_1$ on $s_1, s_2, \ldots, s_N$ 
will update that sequence to $t_1, s_1, s_2, \ldots, s_N$.
\item Pushing a sequence of elements $t_1, t_2, \ldots, t_M$ on $s_1, s_2, \ldots, s_N$ 
will update that sequence to $t_M, \ldots, t_2,  t_1, s_1, s_2, \ldots, s_N$.
\end{enumerateshort}
}
\end{remark}

\subsection{The algorithm}     %
\label{sect:mainalgo}          %
Algorithm~\ref{algo:algorithm} is the top-level procedure.
If its input is a processed quintuple $Q({\cal S})$, then it returns the pair
$(G_{C}({\cal S}), {\lambda}({\cal S}))$ (such that, after optimizing
${\cal S}$ with the optimization strategies in ${\lambda}({\cal S})$,
one can generate the IR of the optimized ${\cal S}$) together
with the system of constraints $C({\cal S})$.
Otherwise, Algorithm~\ref{algo:algorithm} is called recursively 
on each quintuple returned by ${\sf Optimize}(Q({\cal S}))$.
The pseudo-code of the {\sf Optimize} routine is given by
Algorithm~\ref{algo:optimize}.

We make a few observations about Algorithm~\ref{algo:optimize}.
\begin{enumerateshort}
\item[(R1)] Observe that at Line (5), a deep copy of the input $Q({\cal S'})$ is made,
            and this copy is called $Q({\cal S''})$.
            This duplication allows the computations to {\em fork}.
            Note that at Line (6), $Q({\cal S'})$ is modified.
\item[(R2)] In this forking process, we call $Q({\cal S'})$ the {\em accept} branch
            and $Q({\cal S''})$ the {\em refuse} branch.
            In the former case, the relation $0 \leq v_i \leq R_i$ holds thus implying
            that enough $R_i$-resources are available for executing the code fragment
            ${\cal S'}$.
            In the latter case, the relation $R_i < v_i$ holds thus implying
            that {\em not} enough $R_i$-resources are available for executing the code fragment
            ${\cal S''}$.
\item[(R3)] At Lines (18-20), a similar forking process occurs.
            Here again, we call $Q({\cal S'})$ the {\em accept} branch
            and $Q({\cal S''})$ the {\em refuse} branch.
            In the former case, the relation $0 \leq v_i \leq {P}_i$ 
            implies 
            that the ${P}_i$-performance counter may have reached its maximum 
            ratio;
            hence, no optimization strategies are applied to improve this counter.
            In the latter case, the relation ${P}_i < v_i \leq 1$ holds thus implying
            that the ${P}_i$-performance counter has not reached its maximum value;
            hence, optimization strategies are applied to improve this counter 
            if such optimization strategies are available.
             Observe that if this optimization strategy does make
             the estimated value of ${P}_i$ larger then an algebraic
             contradiction would happen and the branch will be discarded.
\item[(R4)] Line (30) in Algorithm~\ref{algo:optimize} requires non-trivial 
            computations with polynomial equations and inequalities.
            The algorithms can be found in~\cite{DBLP:journals/jsc/ChenDMMXX13}
            and are implemented in the {\RegularChains} library
            of {\Maple}.
\item[(R5)] Each system of algebraic constraints $C$ is updated
            by adding a polynomial inequality to it
            at either Lines (6), (7), (19) or (20).
            This incremental process can be performed by the
            {\tt RealTriangularize} algorithm~\cite{DBLP:journals/jsc/ChenDMMXX13}
            and implemented in the {\tt RegularChains} library.
\item[(R6)] Because of the recursive calls at Lines (16) and (29)
            several inequalities involving the same variable
            among $R_1, \ldots, R_s$, ${P}_1, \ldots, {P}_t$
            may be added to a given system $C$.
            As a result, $C$ may become inconsistent.
            For instance if $R_1 < 10$ and $10 \leq R_1$ are
            added to the same system $C$.
            This will happen when an optimization strategy
            fails to improve the value of a resource or performance
            counter.
            Note that inconstancy is automatically detected 
            by the {\tt RealTriangularize} algorithm.
\end{enumerateshort}

\begin{figure}[!htb]
\centering
\includegraphics[clip, trim=0.5cm 8cm 0.5cm 5cm, width=\linewidth]{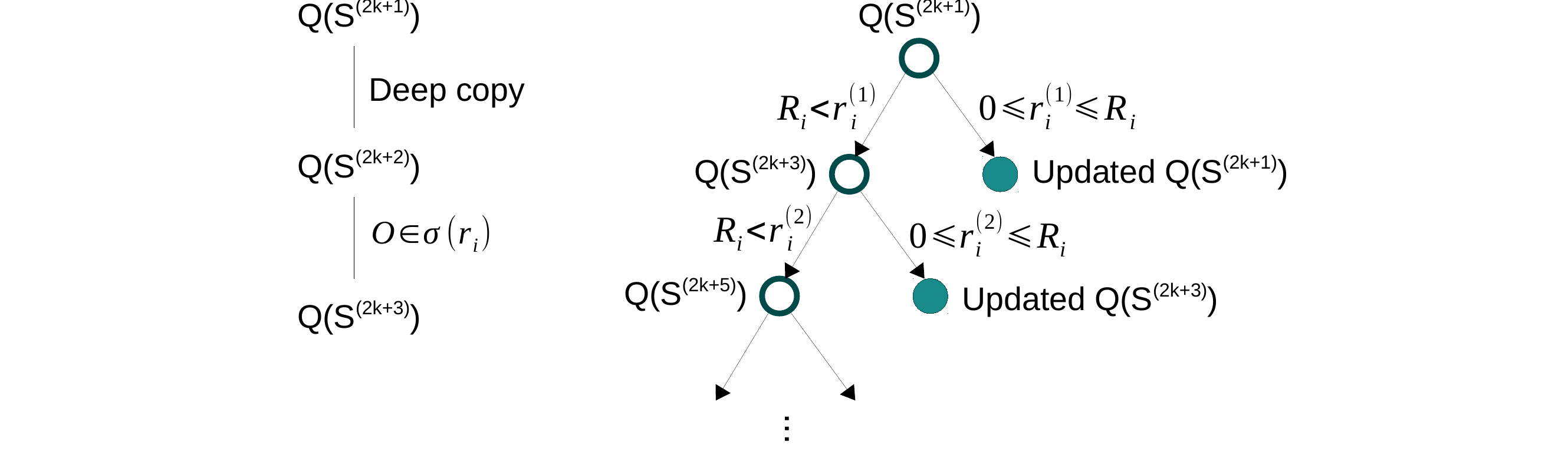}

(a) The decision subtree for resource counter $r_i$

\bigskip

\includegraphics[clip, trim=0.5cm 7cm 0.5cm 5cm, width=\linewidth]{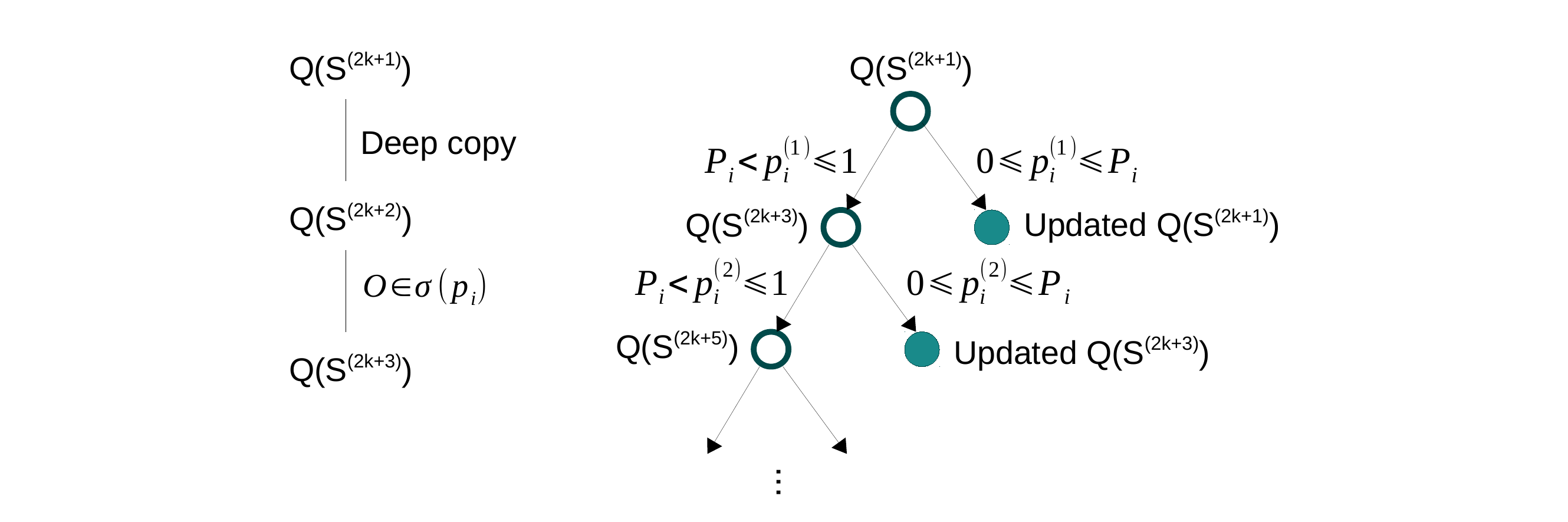}

(b) The decision subtree for performance counter $p_i$

\caption{The decision subtree for resource or performance counters}
\label{fig:countertrees}
\end{figure}

We associate the execution of Algorithm~\ref{algo:algorithm}, applied to $Q({\cal S})$, 
with a tree denoted by ${\cal T}(Q({\cal S}))$ and where
both nodes and edges of ${\cal T}(Q({\cal S}))$ are labelled.
We use the same
notations as in Algorithm~\ref{algo:optimize}.
We define ${\cal T}(Q({\cal S}))$ recursively:
\begin{enumerateshort}
\item[(T1)] We label the root of ${\cal T}(Q({\cal S}))$ with $Q({\cal S})$.
\item[(T2)] If ${\gamma}({\cal S})$ is empty, then
      ${\cal T}(Q({\cal S}))$  has no children;
      otherwise, two cases arise:
\begin{enumerateshort}
\item[(T2.1)] If no optimization strategy is to be applied for optimizing the counter $c$,
       then  ${\cal T}(Q({\cal S}))$  has a single subtree,
       which is that associated with ${\sf Optimize}(Q({\cal S'}))$
       where $Q({\cal S'})$ is obtained from $Q({\cal S})$
       by augmenting $C({\cal S})$ either with  $0 \leq v_i \leq R_i$
      if $c$ is a resource counter or with $0 \leq v_i \leq {P}_i$ otherwise.
\item[(T2.2)] If an optimization strategy is applied, then ${\cal T}(Q({\cal S}))$
      has two subtrees:
\begin{enumerateshort}
\item[(T2.2.1)] The first one is the tree associated with ${\sf Optimize}(Q({\cal S'}))$
                (where $Q({\cal S'})$ is defined as above) and is connected
                to its parent node
                by the {\em accept} edge, labelled with 
                either $0 \leq v_i \leq R_i$ or $0 \leq v_i \leq {P}_i$; see Figure~\ref{fig:countertrees}.
\item[(T2.2.2)] The second one is the tree associated with ${\sf Optimize}(Q({\cal S'''}))$
                (where $Q({\cal S'''})$ is obtained by applying the optimization strategy
                 to the deep copy of the input quintuple $Q({\cal S})$)
                 and is connected to its parent node
                by the {\em refuse} edge, labelled with 
                either $R_i < v_i$ or ${P}_i < v_i \leq 1$; see Figure~\ref{fig:countertrees}.
\end{enumerateshort}
\end{enumerateshort}
\end{enumerateshort}
Observe that every node of ${\cal T}(Q({\cal S}))$
is labelled with a quintuple and every edge 
with a polynomial constraint.

Figure~\ref{fig:countertrees} 
illustrates how Algorithm~\ref{algo:optimize}, applied to $Q({\cal S'})$, 
generates the associated tree ${\cal T}(Q({\cal S'}))$.
The cases for a {\em resource counter}  and a {\em performance counter} 
are distinguished in the sub-figures (a) and (b), respectively.
Observe that, in both cases, the accept edges {\em go south-east}, while
the refuse edges {\em go south-west}.

\begin{algorithm}[!htb]
\small
\caption{{\sf ComprehensiveOptimization}}
\label{algo:algorithm}
\Indm
\KwIn{The quintuple $Q({\cal S})$}
\KwOut{A {\em comprehensive optimization} of ${\cal S}$ 
w.r.t. the resource evaluation functions $f_1, \ldots, f_s$,
the performance evaluation functions $g_1, \ldots, g_t$ and 
the optimization strategies  $O_1, \ldots, O_w$}

\Indp

\If {\rm ${\gamma}({\cal S})$ is empty} {
  \Return $( (G_{C}({\cal S}), {\lambda}({\cal S})), C({\cal S}) )$\;
}

The output stack is initially empty\;
\For {\rm each $Q({\cal S'}) \in {\sf Optimize}(Q({\cal S}))$} {
    Push ${\sf ComprehensiveOptimization}(Q({\cal S'})$ on the output stack\;
}

\Return the output stack\;

\end{algorithm}

\begin{algorithm}[!htb]
\small
\caption{{\sf Optimize}}
\label{algo:optimize}
\Indm
\KwIn{A quintuple $Q({\cal S'})$}
\KwOut{A stack of quintuples}
\Indp

Initialize an empty stack, called {\tt result}\;
Take out from ${\gamma}({\cal S'})$ the next
resource or performance counter to be evaluated, say $c$\;
Evaluate $c$ on ${\cal S'}$ (using 
the appropriate functions among $f_1, \ldots, f_s, g_1, \ldots, g_t$)
thus obtaining a value $v_i$, which can be  
either a numerical value, 
a polynomial in ${\Q}[D_1, \ldots, D_u, E_1, \ldots, E_v]$ or
a rational function where its numerator and denominator are 
in ${\Q}[D_1, \ldots, D_u, E_1, \ldots, E_v,$ $R_1, \ldots, R_s]$\;
\If {\mbox{\rm $c$ is a resource counter $r_i$}} {
Make a deep copy $Q({\cal S''})$  of $Q({\cal S'})$,
since we are going to split the computation
into two branches:  $R_i < v_i$ and $0 \leq v_i \leq R_i$\;
Add the constraint $0 \leq v_i \leq R_i$ to $C({\cal S'})$
and push $Q({\cal S'})$ onto {\tt result}\;
Add the constraint $R_i < v_i$ to $C({\cal S''})$
and search ${\omega}({\cal S''})$ for 
an optimization strategy of ${\sigma}(r_i)$\;
\If {\rm no such optimization strategy exists} {
\Return {\tt result}\;
}
\Else {
Apply  such an optimization strategy
to $Q({\cal S''})$ yielding $Q({\cal S'''})$\;
Remove this optimization strategy from ${\omega}({\cal S'''})$\;
\If {\rm this optimization strategy is applied to the IR
of ${\cal S''}$} {
Add it to ${\lambda}({\cal S'''})$\;
}
Push $r_1, \ldots, r_{i-1}, r_i$  onto ${\gamma}({\cal S'''})$\;
Make a recursive call to {\sf Optimize} on $Q({\cal S'''})$
and push the returned quintuples onto {\tt result}\;
}
}
\If {\rm $c$ is a performance counter $p_i$} {
Make a deep copy $Q({\cal S''})$  of $Q({\cal S'})$,
since we are going to split the computation
into two branches: $0 \leq v_i \leq {P}_i$ 
and ${P}_i < v_i \leq 1$ \;
Add the constraint $0 \leq v_i \leq {P}_i$  to $C({\cal S'})$
and push $Q({\cal S'})$ onto {\tt result}\;
Add the constraint ${P}_i < v_i \leq 1$ to $C({\cal S''})$
and search ${\omega}({\cal S''})$ for 
an optimization strategy of ${\sigma}(p_i)$\;
\If {\rm no such optimization strategy exists} {
\Return {\tt result}\;
}
\Else {
Apply  such an optimization strategy
to $Q({\cal S''})$ yielding $Q({\cal S'''})$\;
Remove this optimization strategy from ${\omega}({\cal S'''})$\;
\If {\rm this optimization strategy is applied to the IR
of ${\cal S''}$} {
Add it to ${\lambda}({\cal S'''})$\;
}
Push $r_1, \ldots, r_s, p_i$  onto ${\gamma}({\cal S'''})$\;
Make a recursive call to {\sf Optimize} on $Q({\cal S'''})$
and push the returned quintuples onto {\tt result}\;
}
}
Remove from {\tt result} any quintuple
with an inconsistent system of constraints\;
\Return {\tt result}\;
\end{algorithm}

\begin{lemma}
\label{lem:terminates}
The height of the tree ${\cal T}(Q({\cal S}))$ is at most $w (s + t)$.
Therefore, Algorithm~\ref{algo:algorithm} terminates.
\end{lemma}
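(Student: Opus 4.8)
The plan is to bound the length of every root-to-leaf path in ${\cal T}(Q({\cal S}))$ by a quantity attached to the quintuples that strictly decreases along each edge, and then to combine this depth bound with the fact that every node has at most two children (the single child of case (T2.1), or the accept/refuse pair of case (T2.2)) to conclude that ${\cal T}(Q({\cal S}))$ is finite. Termination of Algorithm~\ref{algo:algorithm} is then immediate, since its recursive calls --- together with the internal recursion of {\sf Optimize} (Algorithm~\ref{algo:optimize}) --- unfold precisely along the edges of ${\cal T}(Q({\cal S}))$.

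First I would pin down the behaviour of the stack ${\omega}$ of not-yet-applied strategies. The only write to ${\omega}$ in Algorithm~\ref{algo:optimize} is the removal of a strategy on the \emph{refuse} branch (case (T2.2.2)); no edge ever enlarges ${\omega}$. Hence $|{\omega}|$ is non-increasing along any path and drops by exactly one at each refuse edge. As ${\omega}$ initially equals $O_1, \ldots, O_w$, every root-to-leaf path carries at most $w$ refuse edges; this is the source of the factor $w$.

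Next I would control the stack ${\gamma}$ of counters still to be evaluated. On an accept edge the current counter is popped and nothing is pushed, so $|{\gamma}|$ strictly decreases; on a refuse edge the popped counter is replaced by the re-pushed block $r_1, \ldots, r_i$ (for a resource counter $r_i$) or $r_1, \ldots, r_s, p_i$ (for a performance counter $p_i$). A direct check shows the invariant $|{\gamma}| \le s+t$ is preserved: in either case the re-pushed counters are exactly those already popped earlier in the current pass, so after a resource refuse $|{\gamma}|$ returns to $s+t$ and after a performance refuse it equals $s+t-i+1 \le s+t$. Consequently any maximal run of consecutive accept edges has length at most $s+t$, since each decrements $|{\gamma}|$ and a node with empty ${\gamma}$ is a leaf by (T2). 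Packaging both observations into the single potential $\Phi(Q) = |{\omega}(Q)|\,(s+t) + |{\gamma}(Q)|$, which is non-negative and drops by at least one along every edge, already yields a finite height and hence the desired termination; distributing the at most $w$ refuse edges among runs of at most $s+t$ accept edges is what is meant to deliver the stated bound $w(s+t)$.

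The step I expect to be the main obstacle is precisely this last amortization --- proving the invariant $|{\gamma}| \le s+t$ and charging each accept-run to a distinct optimization so that the constant comes out as $w(s+t)$ rather than a weaker multiple. Here the idempotence hypotheses $f_i(O(O({\cal S}))) = f_i(O({\cal S}))$ and $g_i(O(O({\cal S}))) = g_i(O({\cal S}))$, used together with the fact that an applied strategy is immediately removed from ${\omega}$, are essential: they ensure that re-evaluating an already-optimized counter cannot re-spend the same strategy, so the global budget of $w$ refusals genuinely caps the re-evaluation work and prevents any unbounded cascade of re-pushed counters. Making this charging rigorous --- in particular deciding how the very first evaluation pass is absorbed into the count --- is the delicate part of the argument.
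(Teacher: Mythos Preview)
Your approach is essentially the same as the paper's: bound the number of refuse edges by $w$ (since each refuse removes one strategy from ${\omega}$) and bound every maximal run of accept edges by $s+t$ (since each accept strictly shrinks ${\gamma}$), then combine. The paper's proof is exactly these two observations, stated tersely and without the potential function or the explicit invariant $|{\gamma}|\le s+t$; it simply asserts the two bounds and concludes.

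One correction: your claim that the idempotence hypotheses $f_i(O(O({\cal S})))=f_i(O({\cal S}))$ and $g_i(O(O({\cal S})))=g_i(O({\cal S}))$ are \emph{essential} for termination is not right. Termination is purely combinatorial: a strategy, once applied on a refuse edge, is removed from ${\omega}$ at that point (Lines 12 and 25 of Algorithm~\ref{algo:optimize}), so it can never be ``re-spent'' regardless of any semantic property of $f_i$ or $g_i$. Idempotence plays a role elsewhere (in the meaning of ``optimal'' for Lemma~\ref{lem:optimal}), but not in this lemma. Your potential $\Phi(Q)=|{\omega}(Q)|(s+t)+|{\gamma}(Q)|$ already decreases along every edge for entirely syntactic reasons, and that is all you need; drop the idempotence discussion here.
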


\begin{proof}
Consider a path $\Gamma$ from the root of ${\cal T}(Q({\cal S}))$ to
any node $N$ of ${\cal T}(Q({\cal S}))$.  Observe that $\Gamma$ counts
at most $w$ refuse edges. Indeed, following a refuse edge decreases by
one the number of optimization strategies to be used.  Observe also
that the length of every sequence of consecutive accept edges is at
most $s + t$. Indeed, following an accept edge decreases by one the
number of resource and performance counters to be evaluated.
Therefore, the number of edges in $\Gamma$ is at most $w \, (s + t)$.
\end{proof}

\begin{lemma}
\label{lem:choose_path}
Let $U := \{ U_1, \ldots, U_z \}$ be a subset of $\{ O_1, \ldots, 0_w \}$.
There exists a path from the root of ${\cal T}(Q({\cal S}))$
to a leaf of ${\cal T}(Q({\cal S}))$
along which the optimization strategies being applied are exactly those of 
$U$.
\end{lemma}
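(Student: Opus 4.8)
The plan is to exhibit the path directly, by a greedy top-down walk through ${\cal T}(Q({\cal S}))$ in which I follow a refuse edge exactly when I want to apply one of the strategies of $U$ and an accept edge otherwise. As a preliminary step I would fix, for each $U_j \in U$, a counter $c_j$ among $r_1, \ldots, r_s, p_1, \ldots, p_t$ with $U_j \in {\sigma}(c_j)$; such a counter exists because every optimization strategy lies in ${\sigma}(c)$ for at least one counter $c$ (a strategy in no ${\sigma}(c)$ can never be selected by {\sf Optimize}, so no path could apply it). During the walk I maintain the set $U_{\mathrm{rem}}$ of strategies of $U$ not yet applied, starting from $U$, together with the invariant that every strategy applied so far belongs to $U$ and that $U_{\mathrm{rem}} \subseteq {\omega}$ — the latter holds because ${\omega}$ loses only the strategies I apply, and I apply only strategies of $U$.

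At a node with next counter $c$ I would branch as follows. If ${\sigma}(c) \cap U_{\mathrm{rem}} \neq \emptyset$, I follow the refuse edge of (T2.2), electing to apply a strategy $O \in {\sigma}(c) \cap U_{\mathrm{rem}}$; this edge is available since $O \in {\sigma}(c) \cap {\omega}$, and I then delete $O$ from $U_{\mathrm{rem}}$. By (T2.2.2) the counter $c$ is pushed back onto ${\gamma}$, so it will be revisited and any further strategies of ${\sigma}(c) \cap U_{\mathrm{rem}}$ can be applied on the next visits until the intersection is exhausted. If instead ${\sigma}(c) \cap U_{\mathrm{rem}} = \emptyset$, I follow the accept edge, which is present in both (T2.1) and (T2.2.1) and consumes $c$ without applying any strategy. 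Because an accept edge is always available, the walk is well defined; and because ${\cal T}(Q({\cal S}))$ has finite height by Lemma~\ref{lem:terminates}, the walk is finite and ends at a node with empty ${\gamma}$, that is, a leaf.

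It remains to check that the applied strategies are exactly $U$. No strategy outside $U$ is ever applied, since a strategy is applied only on a refuse edge and there I always pick an element of $U_{\mathrm{rem}} \subseteq U$. For the converse I would argue by contradiction: if some $U_j$ were never applied, then $U_j \in U_{\mathrm{rem}}$ throughout the walk, as $U_{\mathrm{rem}}$ only shrinks. Reaching a leaf forces every counter, and in particular $c_j$, to be accepted at some step (a refuse edge merely re-pushes its counter, so only an accept removes it from ${\gamma}$); but my rule accepts $c_j$ only when ${\sigma}(c_j) \cap U_{\mathrm{rem}} = \emptyset$, contradicting $U_j \in {\sigma}(c_j) \cap U_{\mathrm{rem}}$. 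Hence $U_{\mathrm{rem}}$ is empty at the leaf and all of $U$ is applied.

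The step I expect to be delicate is the choice of strategy on a refuse edge. {\sf Optimize} only \emph{searches} ${\omega}$ for some strategy of ${\sigma}(r_i)$, and the argument needs this search to be free to return any chosen element of ${\sigma}(c) \cap {\omega}$ — in particular one in $U$. Under a fixed, uncontrollable scan order one could be compelled to apply an unwanted strategy of ${\sigma}(c) \setminus U$ before reaching a desired one (for instance when ${\sigma}(c) = \{ O, O' \}$ with $O \notin U$ examined first and $O' \in U$), and then an exactly-$U$ path need not exist. I would therefore make explicit that the refuse branching of (T2.2) ranges over all admissible strategy choices, which is exactly the freedom the construction exploits.
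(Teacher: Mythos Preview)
Your approach is essentially the paper's: a greedy root-to-leaf walk that follows the refuse edge when it applies a strategy from $U$ and the accept edge otherwise. The paper's proof is a three-line sketch stating exactly this rule, whereas you have supplied the termination argument, the bookkeeping with $U_{\mathrm{rem}}$, and the verification that every $U_j$ is eventually applied; all of this is sound and simply makes explicit what the paper leaves to the reader.

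Your closing paragraph is worth keeping. The paper's phrasing ``follow the refuse edge if it uses an optimization strategy from $\{U_1,\ldots,U_z\}$'' tacitly assumes that whenever ${\sigma}(c)\cap{\omega}$ meets $U$, the refuse edge can be taken with a strategy \emph{from} $U$; under a fixed deterministic scan of ${\omega}$ this need not hold, and the paper does not address the point. Your proposed reading --- that the refuse branching in (T2.2) ranges over all admissible choices in ${\sigma}(c)\cap{\omega}$ --- is exactly the hypothesis needed, and making it explicit is an improvement over the original proof rather than a deviation from it.
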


\begin{proof}
Let us start at the root of ${\cal T}(Q({\cal S}))$ and apply
the following procedure:
\begin{enumerateshort}
\item follow the refuse edge if it uses an optimization strategy
       from $\{ U_1, \ldots, U_z \}$,
\item follow the accept edge, otherwise.
\end{enumerateshort}
This creates a path from the root of ${\cal T}(Q({\cal S}))$
to a leaf with the desired property.
\end{proof}

\begin{definition}
\label{defi:optimal}
Let $i \in \{ 1, \ldots, s\}$  (resp. $\{1, \ldots, t\}$).
Let $N$ be a node of ${\cal T}(Q({\cal S}))$
and $Q({\cal S}_N)$ be the quintuple labelling this node.
We say that 
$r_i$ (resp. $p_i$) is optimal at $N$
w.r.t. the evaluation function $f_i$ (resp. $g_i$) and
the subset ${\sigma}(r_i)$  (resp. ${\sigma}(p_i)$) 
of the optimization strategies  $O_1, \ldots, O_w$,
whenever for all $O \in {\sigma}(r_i)$  (resp. ${\sigma}(p_i)$) 
we have $f_i(O({\cal S}_{N})) = f_i({\cal S}_{N})$
(resp. $g_i(O({\cal S}_{N})) = g_i({\cal S}_{N})$).
\end{definition}

\begin{lemma}
\label{lem:optimal}
Let $i \in \{ 1, \ldots, s\}$ (resp. $\{1, \ldots, t\}$).  There
exists at least one leaf $L$ of ${\cal T}(Q({\cal S}))$ such that
$r_i$ (resp. $p_i$) is optimal at $L$ 
w.r.t. the evaluation function $f_i$ (resp. $g_i$) and
the subset ${\sigma}(r_i)$  (resp. ${\sigma}(p_i)$) 
of the optimization strategies
$O_1, \ldots, O_w$.
\end{lemma}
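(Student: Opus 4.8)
The plan is to treat the resource-counter case $i \in \{1,\ldots,s\}$ in detail; the performance-counter case is entirely symmetric, obtained by replacing $f_i$, $R_i$, the inequality ``$\le$'' and the refuse condition $R_i < v_i$ by $g_i$, ${P}_i$, ``$\ge$'' and ${P}_i < v_i \le 1$, and by using the corresponding hypotheses $g_i(O(\cdot)) \ge g_i(\cdot)$ and $g_i(O(O(\cdot))) = g_i(O(\cdot))$. First I would exhibit an explicit root-to-leaf path in ${\cal T}(Q({\cal S}))$ along which every strategy of ${\sigma}(r_i)$ is applied and no other strategy is used. Starting at the root, follow the \emph{accept} edge while the counter being evaluated is one of $r_1,\ldots,r_{i-1}$; since in Algorithm~\ref{algo:optimize} the accept branch only appends a constraint $0\le v\le R$ and pushes the same quintuple, accept edges leave the code fragment unchanged, so this brings us to a node still labelled by ${\cal S}$ at which $r_i$ is the counter to be evaluated.

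Next I would descend along the \emph{refuse} edge for $r_i$ repeatedly; see Figure~\ref{fig:countertrees}(a). Each such edge selects a strategy of ${\sigma}(r_i)$ still present in ${\omega}$, applies it, removes it from ${\omega}$, and re-pushes $r_1,\ldots,r_i$ so that $r_i$ is evaluated again immediately. Hence the refuse edges for $r_i$ occur consecutively and each consumes a fresh element of ${\sigma}(r_i)$, so after at most $|{\sigma}(r_i)|$ of them every strategy of ${\sigma}(r_i)$ has been applied; the subsequent search of ${\omega}$ for a strategy of ${\sigma}(r_i)$ fails and only the accept branch survives. Let ${\cal S}_M$ denote the code fragment reached at this point, obtained from ${\cal S}$ by applying exactly the strategies of ${\sigma}(r_i)$. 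Continuing with accept edges only — always available and code-preserving — until ${\gamma}$ is empty yields, by finiteness of ${\cal T}(Q({\cal S}))$ (Lemma~\ref{lem:terminates}), a leaf $L$ with ${\cal S}_L = {\cal S}_M$. One may alternatively invoke Lemma~\ref{lem:choose_path} with $U = {\sigma}(r_i)$ to obtain such a leaf directly.

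It then remains to verify that $r_i$ is optimal at $L$ in the sense of Definition~\ref{defi:optimal}, i.e. $f_i(O({\cal S}_L)) = f_i({\cal S}_L)$ for every $O \in {\sigma}(r_i)$. One inequality, $f_i(O({\cal S}_L)) \le f_i({\cal S}_L)$, is immediate from the monotonicity hypothesis $f_i(O(\cdot)) \le f_i(\cdot)$. For the reverse inequality I would invoke the idempotence hypothesis $f_i(O(O(\cdot))) = f_i(O(\cdot))$: since each $O \in {\sigma}(r_i)$ has already been applied in forming ${\cal S}_L$, re-applying it should not lower $f_i$ any further, giving equality and hence the desired leaf.

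The hard part, which I expect to be the main obstacle, is exactly this last equality for \emph{every} $O \in {\sigma}(r_i)$. Idempotence as literally stated guarantees optimality only with respect to the strategy applied immediately before the final evaluation; a strategy applied earlier in the refuse chain and then followed by other transformations is not obviously still ineffective at ${\cal S}_M$, because nothing in the stated assumptions forces distinct strategies of ${\sigma}(r_i)$ to commute. To bridge this I would read monotonicity and idempotence as properties of each pair $(O,f_i)$ valid on \emph{every} code fragment, and promote idempotence to the natural \emph{persistence} statement that once $O$ has been applied, no later transformation re-enables a strict decrease of $f_i$ by $O$; under this reading the saturation of ${\sigma}(r_i)$ at ${\cal S}_M$ yields optimality at $L$. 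Isolating and justifying this persistence — equivalently, a confluence property of the family ${\sigma}(r_i)$ with respect to $f_i$ — is the delicate step on which the lemma ultimately rests.
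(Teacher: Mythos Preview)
Your approach is the paper's approach: the paper's entire proof of this lemma is the single sentence ``Apply Lemma~\ref{lem:choose_path} with $U = {\sigma}(r_i)$ (resp.\ $U = {\sigma}(p_i)$),'' which you yourself offer as the alternative to your explicit path construction. The explicit construction you give is just an unfolding of that same root-to-leaf path, so on the level of strategy there is no difference.

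Where you go beyond the paper is in your last two paragraphs, and the concern you raise there is legitimate. The paper's one-line proof does not argue --- and the stated hypotheses do not obviously imply --- that having applied every $O \in {\sigma}(r_i)$ somewhere along the path forces $f_i(O({\cal S}_L)) = f_i({\cal S}_L)$ for each such $O$ at the leaf. Idempotence as formulated, $f_i(O(O(\cdot))) = f_i(O(\cdot))$, governs only the last strategy applied; nothing stated prevents a later $O' \in {\sigma}(r_i)$ from ``re-enabling'' an earlier $O$. The paper simply identifies ``every strategy of ${\sigma}(r_i)$ has been applied'' with ``optimal in the sense of Definition~\ref{defi:optimal}'' and stops. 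So the gap you isolate is real, but it is a gap in the paper's own argument rather than a divergence from it: you have reproduced the paper's proof and then scrutinized a step the paper leaves implicit. Your proposed reading --- promoting idempotence to a persistence/confluence property of the whole family ${\sigma}(r_i)$ with respect to $f_i$ --- is exactly the tacit assumption the paper is making.
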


\begin{proof}
Apply Lemma~\ref{lem:choose_path} with $U = {\sigma}(r_i)$
(resp. $U = {\sigma}(p_i)$).
\end{proof}

\begin{lemma}
Algorithm~\ref{algo:algorithm} satisfies its output specifications.
\end{lemma}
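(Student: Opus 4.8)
The plan is to verify that the sequence of pairs returned by Algorithm~\ref{algo:algorithm} meets the four requirements of Definition~\ref{def:Comprehensiveoptimization}. My first step is to make precise the correspondence between the output and the tree ${\cal T}(Q({\cal S}))$: unwinding the recursion of Algorithm~\ref{algo:algorithm} shows that each returned pair $((G_{C}({\cal S}_i), {\lambda}({\cal S}_i)), C({\cal S}_i))$ is the label of a leaf of ${\cal T}(Q({\cal S}))$ whose constraint system survived the consistency filter on the last line of Algorithm~\ref{algo:optimize}, and conversely every such leaf yields exactly one returned pair. Termination, so that these leaves exist and are finite in number, is Lemma~\ref{lem:terminates}. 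Writing $C_i := C({\cal S}_i)$ and letting ${\cal S}_i$ be the code obtained from ${\cal S}$ by applying the strategies recorded along the root-to-leaf path, it then suffices to check $(i)$--$(iv)$.

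Conditions $(i)$ and $(ii)$ are the most direct. For \textbf{constraint soundness} $(i)$: the last line of Algorithm~\ref{algo:optimize} discards every quintuple carrying an inconsistent system, and, as observed in (R6), inconsistency is detected exactly by {\tt RealTriangularize}; hence each surviving $C_i$ admits a real solution. For \textbf{code soundness} $(ii)$: by the assumption in Section~\ref{sec:optimization_strategies} each $O \in \{O_1,\dots,O_w\}$ preserves semantics, and ${\cal S}_i$ is obtained from ${\cal S}$ by composing finitely many such transformations along the path; an induction on the path length shows that ${\cal S}_i$ and ${\cal S}$ produce the same output on any data on which ${\cal S}$ executes correctly. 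The accept-constraints $0 \le v_k \le R_k$ accumulated in $C_i$ additionally guarantee that ${\cal S}_i$ has enough resources to run whenever the machine parameters satisfy $C_i$.

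For \textbf{coverage} $(iii)$ the key observation is that the branching is exhaustive in the machine parameters: at a resource counter the accept/refuse split is $0 \le v_i \le R_i$ versus $R_i < v_i$, and at a performance counter it is $0 \le v_i \le P_i$ versus $P_i < v_i \le 1$, so for fixed values of $D_1,\dots,D_u,E_1,\dots,E_v$ one may always stay in the accept branch by taking $R_i$ large and $P_i$ close to $1$. Following every accept edge is precisely the path furnished by Lemma~\ref{lem:choose_path} with $U = \emptyset$; it reaches a leaf whose code is ${\cal S}$ itself (no strategy applied) and whose constraints $0 \le v_k \le R_k$ and $0 \le v_k \le P_k$ are satisfiable for the prescribed $D,E$ and suitable $R,P$. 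This leaf carries a consistent system, hence belongs to the output, and trivially reproduces the output of ${\cal S}$, establishing $(iii)$. For \textbf{optimality} $(iv)$ I would invoke Lemma~\ref{lem:optimal}: for each $r_i$ (resp. $p_i$) it supplies a leaf $L$ at which the counter is optimal in the sense of Definition~\ref{defi:optimal}, which is exactly the statement of $(iv)$.

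The main obstacle is reconciling the purely structural tree ${\cal T}(Q({\cal S}))$ used by Lemmas~\ref{lem:choose_path} and~\ref{lem:optimal} with the consistency pruning performed by Algorithm~\ref{algo:optimize}: a leaf produced by the path of Lemma~\ref{lem:optimal} lies in the output only if its system is consistent, yet, as (R6) warns, following a refuse edge whose optimization \emph{fails} to lower the counter manufactures a contradiction such as $R_i < v_i$ together with $v_i \le R_i$. The delicate step is therefore to argue that a consistent optimal leaf is always reachable: along the path selecting ${\sigma}(r_i)$ one keeps following a refuse edge only while its optimization strictly decreases $f_i$, and the strict-decrease intention together with the idempotence assumption $f_i(O(O({\cal S}))) = f_i(O({\cal S}))$ guarantees that after finitely many steps no strategy in ${\sigma}(r_i)$ can decrease $f_i$ further; the counter is then optimal while, no contradiction-producing refuse edge having been taken, the accumulated system stays consistent. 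Tying this strict-decrease refinement of Lemma~\ref{lem:optimal} to the pruning line of Algorithm~\ref{algo:optimize} is where the proof needs the most care; the remaining conditions follow by assembling the pieces above.
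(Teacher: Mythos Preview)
Your proof is correct and, for conditions $(i)$, $(ii)$, and $(iv)$, follows the same line as the paper: termination from Lemma~\ref{lem:terminates}, consistency via the pruning step and {\tt RealTriangularize}, code soundness from the semantics-preservation hypothesis on the $O_j$, and optimality from Lemma~\ref{lem:optimal}. For condition $(iii)$ you take a genuinely different route. The paper argues coverage by observing that each time an inequality is pushed onto one branch its negation is pushed onto the sibling, so the case split is exhaustive, and then concludes by a short induction on $s+t$. You instead exhibit an explicit witness: Lemma~\ref{lem:choose_path} with $U=\emptyset$ produces the all-accept path, whose leaf carries the unmodified code ${\cal S}$ and constraints of the form $0\le v_k\le R_k$, $0\le v_k\le P_k$, which you satisfy for arbitrary $D,E$ by choosing $R_k$ large and $P_k$ near $1$. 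Your argument is more constructive and makes the satisfying assignment visible; the paper's is terser but leaves the witness implicit. You also go further than the paper on $(iv)$: you correctly note that Lemma~\ref{lem:optimal} locates an optimal leaf in the \emph{unpruned} tree ${\cal T}(Q({\cal S}))$, whereas Algorithm~\ref{algo:optimize} discards leaves with inconsistent systems, and you sketch how the idempotence hypothesis keeps at least one optimal leaf consistent. The paper's proof simply cites Lemma~\ref{lem:optimal} and does not address this pruning issue at all.
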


\begin{proof}
From Lemma~\ref{lem:terminates}, we know that Algorithm~\ref{algo:algorithm} 
terminates. So let $(C_1, {\cal S}_1), \ldots, (C_e,
{\cal S}_e)$ be its output.
We shall prove $(C_1, {\cal S}_1), \ldots, (C_e,
{\cal S}_e)$ satisfies the conditions $(i)$ to $(iv)$
of Definition~\ref{def:Comprehensiveoptimization}.

Condition $(i)$ is satisfied
 by the properties of the {\tt RealTriangularize}
algorithm.
Condition $(ii)$ follows clearly from the assumption
that the code transformations performed by $O_1, \ldots, O_w$
preserve semantics.
Observe that each time a polynomial inequality
is added to a system of constraints, the negation
of this inequality is also to the same system 
in another branch of the computations.
By using a simple induction on $s + t$, 
we deduce that Condition $(iii)$  is satisfied.
Finally, we prove Condition $(iv)$ by using Lemma~\ref{lem:optimal}.
\end{proof}

\section{Comprehensive Translation}
\label{comprehensive_translation}

Given a high-level model for accelerator programming (like
{\opencl}~\cite{stone2010opencl}, {\openmp}, {\openacc} or
{\meta}~\cite{DBLP:conf/iwomp/ChenMSU14}), we consider the problem of
translating a program written for such a high-level model into a
programming model for GPGPU devices, such as {\cuda}.
We assume that the numerical values of some, or all, of the hardware
characteristics of the targeted GPGPU device are unknown. 
Hence, these quantities are treated as symbols.  
Similarly, we would like that some, or all, of the program parameters remain
symbols in the generated code.

In our implementation, we focus on one high-level model for
accelerator programming, namely {\meta}. However, we believe that an
adaptation to another high-level model for accelerator programming
would not be difficult.  One supporting reason for that claim is the
fact that automatic code translation between the {\meta} and {\openmp}
languages can already be done within the {\meta} compilation
framework, see~\cite{DBLP:conf/iwomp/ChenMSU14}.

We consider as input a {\tt meta\_schedule} statement
${\cal M}$ and its surrounding {\meta} program ${\cal P}$. 
In our implementation, we assume that, apart from the 
{\tt meta\_schedule} statement ${\cal M}$, the rest of
the program ${\cal P}$ is serial C code.
Now, applying the comprehensive optimization algorithm (described in
Section~\ref{sec:comprehensive_optimization}) on the {\tt
  meta\_schedule} statement ${\cal M}$ (with prescribed resource
evaluation functions, performance evaluation functions and
optimization strategies) we obtain a sequence of processed quintuples
of {\tt meta\_schedule} statements ${\cal Q}_1({\cal M}), {\cal
  Q}_2({\cal M}), \ldots, {\cal Q}_{\ell}({\cal M})$, which forms a
comprehensive optimization in the sense of
Definition~\ref{defi:comprehensive_optimization}.

If, as mentioned in the introduction,  {\ptx} is used
as intermediate representation (IR) then, for each 
$i = 1, \ldots, {\ell}$, under the constraints
defined by the polynomial system associated with ${\cal Q}_i({\cal M})$,
the IR code associated with ${\cal Q}_i({\cal M})$ is
the translation in assembly language of a {\cuda}
counterpart of ${\cal M}$.
In our implementation, we also translate to {\cuda} source code
the {\meta} code in each ${\cal Q}_i({\cal M})$, since
this is easier to read for a human being.

\section{Experimentation}
\label{sec:experimentation}

We report on a preliminary implementation
\ifarticle
of Algorithm~\ref{algo:algorithm} 
\fi
dedicated to the optimization of {\tt
  meta\_schedule} statements in view of generating parametric {\cuda}
kernels.
Two hardware resource counters are considered: register usage per
thread, and local/shared memory allocated per thread-block. 
No performance counters are specified,
however, by design, the algorithm tries to minimize the usage
of hardware resources.
Four optimization strategies are used:
$(i)$ reducing register pressure, 
$(ii)$ controlling thread granularity,
$(iii)$ common sub-expression elimination (CSE),
and $(iv)$ caching\footnote{In the {\meta} language,
the keyword {\tt cache} is used to indicate that 
every thread accessing a specified array {\tt a}
must copy in local/shared memory the data 
it accesses in {\tt a}.} data in local/shared memory.
The first one applies to the IR CFG and uses {\llvm};
$(ii)$ and $(iii)$ are performed in {\maple} on the source CFG
while $(iv)$ combines PET~\cite{verdoolaege2012polyhedral} and 
the {\RegularChains} library in {\maple}.
Moreover, for $(i)$ and $(iii)$, 3 and 2 levels
of optimization are used, respectively.
Figure~\ref{fig:implement_tools} gives an overview of the software tools
that are involved in our implementation.

\begin{figure}[!htb]
\centering
\includegraphics[clip, trim=0.5cm 4cm 0.5cm 2cm, width=\linewidth]{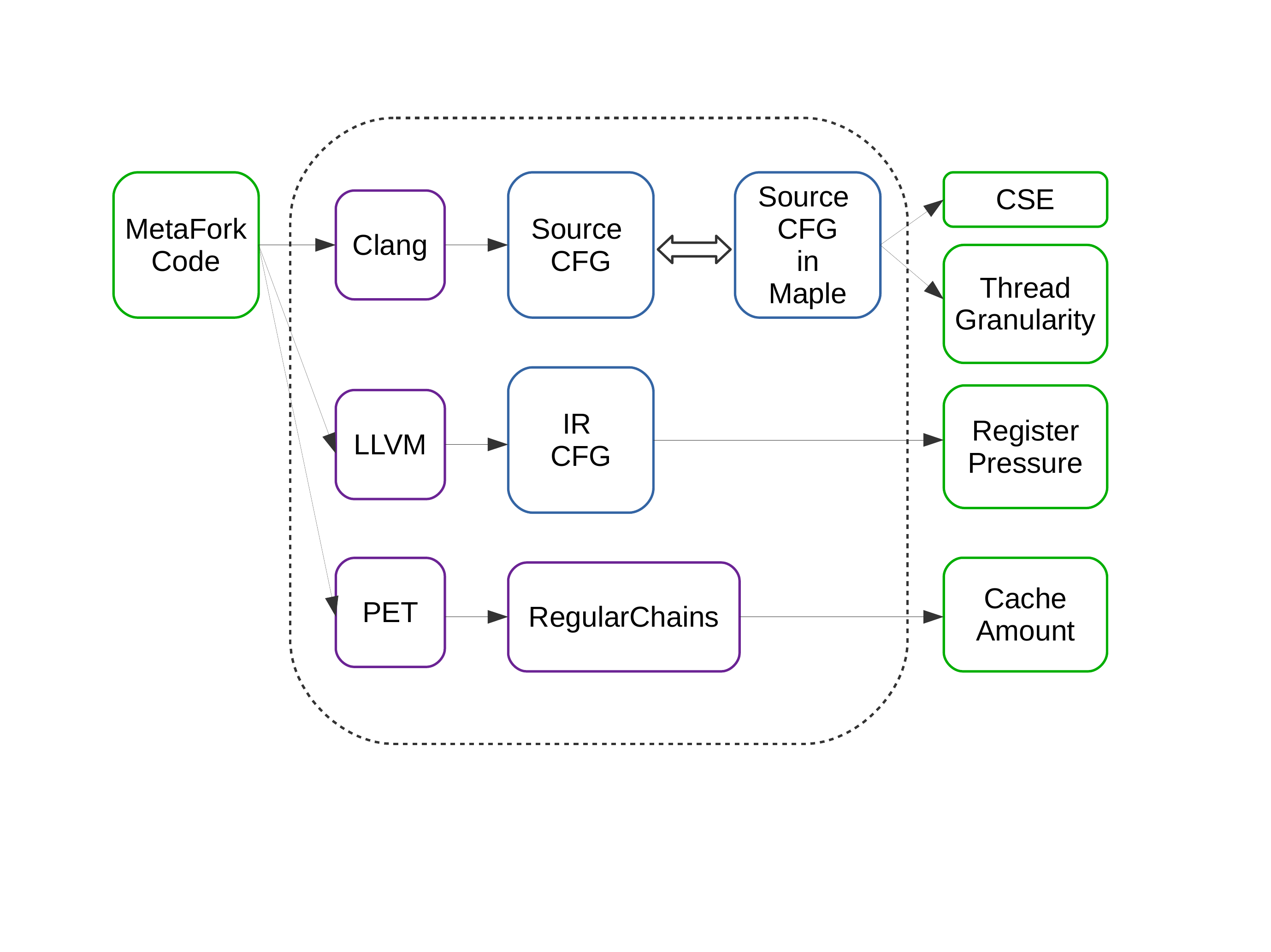}
\caption{The software tools involved in our implementation}
\label{fig:implement_tools}
\end{figure}

The test examples of our CASCON paper~\cite{DBLP:conf/cascon/ChenCKMX15} 
have been extensively tested with that implementation.
In the interest of space, we have selected
two representative examples.
For both of them, again in the interest of space, 
we present the optimized {\meta} code,
instead of the optimized  IR code
(or the {\cuda} code generated from {\meta}).

For each system of constraints, we indicate which
decisions were made by the algorithm to reach that case.
To this end, we use the following abbreviations:
(1), (2), (3a), (3b), (4a), (4b)
respectively stand for
``No register pressure optimization'',
``CSE is Applied''
``thread granularity not-reduced'',
``reduced thread granularity'',
``Use local/shared memory'',
``Do not use local/shared memory''.

For both test-examples, we give speedup factors (against 
an optiimzied serial C code) obtained with the most efficient 
of our generated {\cuda} kernels.
All {\cuda} experimental results are collected on an NVIDIA Tesla M2050.

\ifarticle
\subsection{1D Jacobi}
\else
\subsubsection*{1D Jacobi}
\fi

Both source and optimized {\meta} programs are shown on
Figure~\ref{fig:1dj}.
Table~\ref{tab:1DJacobi_granu} shows the speedup factors for the first case
of optimized {\meta} programs in Figure~\ref{fig:1dj}.

\begin{figure}[!htb]
\begin{minipage}{.45\columnwidth}
\scriptsize
\begin{center}
Source code
\end{center}

\begin{Verbatim}[commandchars=\\\{\}]
int T, N, s, B,
int dim = (N-2)/(s*B);
int a[2*N];
for (int t = 0; t<T; ++t)
 \textcolor{red}{meta_schedule} \{
  \textcolor{red}{meta_for} (int i = 0; 
              i<dim; i++)
   \textcolor{red}{meta_for} (int j = 0; 
                  j<B; j++)
    for (int k = 0; k<s; 
                     ++k) \{
     int p = i*s*B+k*B+j;
     int p1 = p + 1;
     int p2 = p + 2;
     int np = N + p;
     int np1 = N + p + 1;
     int np2 = N + p + 2;
     if (t % 2)
      a[p1] = (a[np]+
        a[np1]+a[np2])/3;
     else
      a[np1] = (a[p]+
        a[p1]+a[p2])/3;
    \}
 \}
\end{Verbatim}
\end{minipage}
\begin{minipage}{.45\columnwidth}
\scriptsize
\begin{center}
First case
\end{center}

$\left\{ \begin{array}{l}
2  s  B + 2 \leq Z_B  \\
9 \leq R_B\\
\end{array}\right.$

 (1) (4a) (3a) (2) (2)
 
\begin{Verbatim}[commandchars=\\\{\}]
for (int t = 0; t<T; ++t)
 \textcolor{red}{meta_schedule} \textcolor{violet}{cache}(a) \{
  \textcolor{red}{meta_for} (int i = 0; 
              i< dim; i++)
   \textcolor{red}{meta_for} (int j = 0; 
                  j<B; j++)
    \textcolor{blue}{for (int k = 0; k<s;} 
                     \textcolor{blue}{++k)} \{
     int p = j+(i*s+k)*B;
     int t16 = p+1;
     int t15 = p+2;
     int p1 = t16;
     int p2 = t15;
     int np = N+p;
     int np1 = N+t16;
     int np2 = N+t15;
     if (t % 2)
      a[p1] = (a[np]+
        a[np1]+a[np2])/3;
     else
      a[np1] = (a[p]+
        a[p1]+a[p2])/3;
     \}
 \}
\end{Verbatim}
\end{minipage}

\begin{minipage}{.45\columnwidth}
\scriptsize
\begin{center}
Second case
\end{center}

$\left\{ \begin{array}{l}
2  B + 2 \leq Z_B \\
Z_B <  2  s  B + 2  \\
9 \leq R_B \\
\end{array}\right.$

 (1) (3b) (4a) (3a) (2) (2)
\begin{Verbatim}[commandchars=\\\{\}]
for (int t = 0; t<T; ++t)
 \textcolor{red}{meta_schedule} \textcolor{violet}{cache}(a) \{
  \textcolor{red}{meta_for} (int i = 0; 
              i<dim; i++)
   \textcolor{red}{meta_for} (int j = 0; 
                j<B; j++) \{
    int p = i*B+j;
    int t20 = p+1;
    int t19 = p+2;
    int p1 = t20;
    int p2 = t19;
    int np = N+p;
    int np2 = N+t19;
    int np1 = N+t20;
    if (t % 2)
     a[p1] = (a[np]+
       a[np1]+a[np2])/3;
    else
     a[np1] = (a[p]+
       a[p1]+a[p2])/3;
    \}
 \}
\end{Verbatim}
\end{minipage}
\begin{minipage}{.45\columnwidth}
\scriptsize
\begin{center}
Third case
\end{center}

$\left\{ \begin{array}{l}
Z_B < 2  B + 2  \\
9 \leq R_B  \\
\end{array}\right.$

 (1) (3b) (2) (2) (4b)
\begin{Verbatim}[commandchars=\\\{\}]
for (int t = 0; t<T; ++t)
 \textcolor{red}{meta_schedule} \{
  \textcolor{red}{meta_for} (int i = 0; 
              i<dim; i++)
   \textcolor{red}{meta_for} (int j = 0; 
                j<B; j++) \{
    int p = j+i*B;
    int t16 = p+1;
    int t15 = p+2;
    int p1 = t16;
    int p2 = t15;
    int np = N+p;
    int np1 = N+t16;
    int np2 = N+t15;
    if (t % 2)
     a[p1] = (a[np]+
       a[np1]+a[np2])/3;
    else
     a[np1] = (a[p]+
       a[p1]+a[p2])/3;
    \}
 \}
\end{Verbatim}
\end{minipage}
\caption{\rm Three optimized {\meta} programs for 1D Jacobi}
\label{fig:1dj}
\end{figure}

\begin{table}[!htb]
\small
\centering
\begin{tabular}{|c|c|c|c|}
\hline 
Thread-block $\backslash$ Granularity & 2	 & 4 & 8 \\
\hline
16	& 3.340	& 4.357 & 4.975 	\\
32	& 4.785	& 5.252 & 5.206	\\
64	& 5.927	& 6.264 & 6.412	\\
128	& \textbf{10.400}	& 8.952 & 5.793	\\
256	& 6.859 	& 6.246 &	\\
\hline
\end{tabular}
\caption{\rm Speedup factors of 1D Jacobi for time iteration 4 and input vector of length  $2^{15}$+2}
\label{tab:1DJacobi_granu}
\end{table}

\ifarticle
\subsection{Matrix transposition}
\else
\subsubsection*{Matrix transposition}
\fi

\ifarticle

\else
Due to the limitation in the {\tt codegen[optimize]} package of
{\Maple}, the CSE optimizer could not handle a two-dimensional array
on the left-hand side of assignments.  Thus, we use a one-dimensional
array to represent the output matrix.
\fi
Three cases of optimized {\meta} programs are 
shown on Figure~\ref{fig:transpose}.
Table~\ref{tab:transpose_granu} shows the speedup factors for
the first case of optimized {\meta} programs in Figure~\ref{fig:transpose}.

\begin{figure}[!htb]
\begin{minipage}{.3\columnwidth}
\begin{scriptsize}
\begin{center}
First case
\end{center}

$\left\{ \begin{array}{l}
2  s  B_0 B_1 \leq Z_B  \\
6 \leq R_B\\
\end{array}\right.$

 (1) (4a) (3a) (2) (2) 
\end{scriptsize}
\end{minipage}
\begin{minipage}{0.5\columnwidth}
\begin{scriptsize}
\begin{Verbatim}[commandchars=\\\{\}]
\textcolor{red}{meta_schedule} \textcolor{violet}{cache}(a, c) \{
 \textcolor{red}{meta_for} (int v0 = 0; v0<dim0; v0++)
  \textcolor{red}{meta_for} (int v1 = 0; v1<dim1; v1++)
   \textcolor{red}{meta_for} (int u0 = 0; u0<B0; u0++)
    \textcolor{red}{meta_for} (int u1 = 0; u1<B1; u1++)
     \textcolor{blue}{for (int k = 0; k < s; ++k)} \{
      int i = v0*B0+u0;
      int j = (v1*s+k)*B1+u1;
      c[i*N+j] = a[i][j];
     \}
\}
\end{Verbatim}
\end{scriptsize}
\end{minipage}

\begin{minipage}{.3\columnwidth}
\begin{scriptsize}
\begin{center}
Second case
\end{center}

$\left\{ \begin{array}{l}
2  B_0  B_1 \leq Z_B \\
Z_B <  2  s  B_0  B_1  \\
5 \leq R_B \\
\end{array}\right.$

 (1) (3b) (4a) (3a) (2) (2) 
\smallskip

$\left\{ \begin{array}{l}
2  B_0  B_1 \leq Z_B \\
Z_B  <  2  s  B_0  B_1  \\
5 \leq R_B < 6 \\
\end{array}\right.$

 (2) (2) (3b) (1) (4a) (3a) 
\end{scriptsize}
\end{minipage}
\begin{minipage}{0.5\columnwidth}
\begin{scriptsize}
\begin{Verbatim}[commandchars=\\\{\}]
\textcolor{red}{meta_schedule} \textcolor{violet}{cache}(a, c) \{
 \textcolor{red}{meta_for} (int v0 = 0; v0<dim0; v0++)
  \textcolor{red}{meta_for} (int v1 = 0; v1<dim1; v1++)
   \textcolor{red}{meta_for} (int u0 = 0; u0<B0; u0++)
    \textcolor{red}{meta_for} (int u1 = 0; u1<B1; u1++) \{
     int i = v0*B0+u0;
     int j = v1*B1+u1;
     c[i*N+j] = a[i][j];
    \}
\}
\end{Verbatim}
\end{scriptsize}
\end{minipage}

\begin{minipage}{.3\columnwidth}
\begin{scriptsize}
\begin{center}
Third case
\end{center}

$\left\{ \begin{array}{l}
Z_B < 2  B_0  B_1 \\
5 \leq R_B  \\
\end{array}\right.$

 (1) (3b) (2) (2) (4b) 
\smallskip

$\left\{ \begin{array}{l}
Z_B < 2  B_0  B_1 \\
5 \leq R_B < 6  \\
\end{array}\right.$

 (2) (2) (3b) (1) (4b) 
\end{scriptsize}
\end{minipage}
\begin{minipage}{0.5\columnwidth}
\begin{scriptsize}
\begin{Verbatim}[commandchars=\\\{\}]
\textcolor{red}{meta_schedule} \{
 \textcolor{red}{meta_for} (int v0 = 0; v0<dim0; v0++)
  \textcolor{red}{meta_for} (int v1 = 0; v1<dim1; v1++)
   \textcolor{red}{meta_for} (int u0 = 0; u0<B0; u0++)
    \textcolor{red}{meta_for} (int u1 = 0; u1<B1; u1++) \{
     int i = v0*B0+u0;
     int j = v1*B1+u1;
     c[i*N+j] = a[i][j];
    \}
\}
\end{Verbatim}
\end{scriptsize}
\end{minipage}
\caption{\rm Three optimized {\meta} programs for matrix transposition}
\label{fig:transpose}
\end{figure}


\begin{table}[!htb]
\small
\centering
\begin{tabular}{|c|c|c|c|}
\hline
Thread-block $\backslash$ Granularity & 2 & 4 & 8 \\
\hline
(4, 32) & 103.281 & 96.284  & 75.211  \\
(8, 32) & \textbf{111.971}  & 90.625  & 85.422  \\
(16, 32) &78.476  & 68.894  & 48.822  \\
(32, 32) & 45.084  & 46.425  & 32.824  \\
\hline
\end{tabular}
\caption{\rm Speedup factors of matrix transposition for input matrix of order  $2^{14}$}
\label{tab:transpose_granu}
\end{table}

\ifshow

\subsection{Polynomial division}
Three cases of optimized {\meta} programs are shown on Figure~\ref{fig:division}.
Table~\ref{tab:pd} shows the speedup factors for the third case of
optimized {\meta} programs in Figure~\ref{fig:division}, which outperforms
the first case.

\begin{figure}[!htb]
\begin{minipage}{.3\columnwidth}
\begin{scriptsize}
\begin{center}
First case
\end{center}

$\left\{ \begin{array}{l}
2 s B + 2 \leq Z_B  \\
5 \leq R_B\\
\end{array}\right.$

 (1) (4a) (3a) (2) (2) 
\end{scriptsize}
\end{minipage}
\begin{minipage}{0.5\columnwidth}
\begin{scriptsize}
\begin{Verbatim}[commandchars=\\\{\}]
while(N >= M) \{
 \textcolor{red}{meta_schedule} \textcolor{violet}{cache}(a, b) \{
  \textcolor{red}{meta_for} (int u = 0; u<dim; ++u)
   \textcolor{red}{meta_for} (int v = 0; v<B; ++v)
    \textcolor{blue}{for (int k = 0; k < s; ++k)} \{
     int p = (u*s+k)*B+v;
     int x = N-M+p-1;
     int e = (a[N-1] * b[p]) % P;
     a[x] = (a[x] - e) % P;
   \}
 \}
 N--;
\}
\end{Verbatim}
\end{scriptsize}
\end{minipage}

\begin{minipage}{.3\columnwidth}
\begin{scriptsize}
\begin{center}
Second case
\end{center}

$\left\{ \begin{array}{l}
2 B + 2 \leq Z_B  \\
Z_B <  2 s B + 2  \\
4 \leq R_B < 5 \\
\end{array}\right.$

 (1) (3b) (4a) (3a) (2) (2) 
\smallskip

$\left\{ \begin{array}{l}
2 B + 2 \leq Z_B \\
Z_B <  2 s B + 2  \\
4 \leq R_B < 5 \\
\end{array}\right.$

 (2) (2) (3b) (1) (4a) (3a) 
\end{scriptsize}
\end{minipage}
\begin{minipage}{0.5\columnwidth}
\begin{scriptsize}
\begin{Verbatim}[commandchars=\\\{\}]
while(N >= M) \{
 \textcolor{red}{meta_schedule} \textcolor{violet}{cache}(a, b) \{
  \textcolor{red}{meta_for} (int u = 0; u<dim; ++u)
   \textcolor{red}{meta_for} (int v = 0; v<B; ++v) \{
    int p = u*B+v;
    int x = N-M+p-1;
    int e = (a[N-1] * b[p]) % P;
    a[x] = (a[x] - e) % P;
   \}
 \}
 N--;
\}
\end{Verbatim}
\end{scriptsize}
\end{minipage}

\begin{minipage}{.3\columnwidth}
\begin{scriptsize}
\begin{center}
Third case
\end{center}

$\left\{ \begin{array}{l}
Z_B < 2 B + 2  \\
4 \leq R_B < 5 \\
\end{array}\right.$

 (1) (3b) (2) (2) (4b) 
\smallskip

$\left\{ \begin{array}{l}
Z_B < 2 B + 2  \\
4 \leq R_B < 5 \\
\end{array}\right.$

 (2) (2) (3b) (1) (4b) 
\end{scriptsize}
\end{minipage}
\begin{minipage}{0.5\columnwidth}
\begin{scriptsize}
\begin{Verbatim}[commandchars=\\\{\}]
while(N >= M) \{
 \textcolor{red}{meta_schedule} \{
  \textcolor{red}{meta_for} (int u = 0; u<dim; ++u)
   \textcolor{red}{meta_for} (int v = 0; v<B; ++v) \{
    int p = u*B+v;
    int x = N-M+p-1;
    int e = (a[N-1] * b[p]) % P;
    a[x] = (a[x] - e) % P;
   \}
 \}
 N--;
\}
\end{Verbatim}
\end{scriptsize}
\end{minipage}
\caption{\rm Three optimized {\meta} program for polynomial division}
\label{fig:division}
\end{figure}

\begin{table}[!htb]
\small
\centering
\begin{tabular}{|c|c|c|c|}
\hline
& \multicolumn{3}{|c|}{Input-size {\tt M}, {\tt N} = {\tt M}+100} \\
\hline
Thread-block & $2^{13}$ & $2^{14}$ & $2^{15}$ \\
\hline
16  & 5.468 & 7.094 & 8.433 \\
32  & 7.208 & 10.758 & 12.739 \\
64  & 8.375 & 14.057 & 20.974 \\
128 & \textbf{9.245} & \textbf{15.688} & \textbf{23.881} \\
256 & 8.094 & 15.316 & 23.115 \\
512 & 7.362 & 14.201 & 22.565 \\
\hline
\end{tabular}
\caption{\rm Speedup factors of polynomial division in {\meta} without the granularity} 
\label{tab:pd}
\end{table}

\fi

\section{Concluding Remarks}
\label{sec:Conclusion}

We have shown how, from an annotated C/C++ program, parametric {\cuda}
kernels could be optimized.  These optimized parametric {\cuda}
kernels are organized in the form of a case discussion, where cases
depend on the values of machine parameters (e.g. hardware resource
limits) and program parameters (e.g. dimension sizes of
thread-blocks). 

The proposed approach extend previous works,
in particular {\ppcg}~\cite{DBLP:journals/taco/VerdoolaegeJCGTC13} and
{\cudachill}~\cite{Khan:2013:SAC:2400682.2400690}, and combine them with
techniques from computer algebra. Indeed, handling systems of 
non-linear polynomial equations and inequalities is required
in the context of parametric {\cuda} kernels.

Our preliminary implementation uses {\llvm}, {\Maple}  and 
{\ppcg}; it successfully processes a variety of standard test-examples.
In particular, the computer algebra portion of the computations
is not a bottleneck.

\section*{Acknowledgments}
The authors would like to thank the IBM Toronto labs and
NSERC of Canada for supporting their work.

\begin{small}
\bibliography{references}
\bibliographystyle{plain}
\end{small}

\end{document}